\documentstyle[11pt,epsfig]{article}

\setlength{\textheight}{8.8in}
\setlength{\textwidth}{6.5in}
\setlength{\evensidemargin}{-0.18in}
\setlength{\oddsidemargin}{-0.18in}
\setlength{\headheight}{0in}
\setlength{\headsep}{10pt}
\setlength{\topsep}{0in}
\setlength{\topmargin}{0.0in}
\setlength{\itemsep}{0in}      

\parskip=0.080in

\newcommand {\ignore} [1] {}

\newtheorem{theorem}{Theorem}[section]
\newtheorem{lemma}[theorem]{Lemma}
\newtheorem{fact}[theorem]{Fact}
\newtheorem{corollary}[theorem]{Corollary}
\newtheorem{definition}{Definition}[section]
\newtheorem{proposition}[theorem]{Proposition}

\newtheorem{claim}[theorem]{Claim}

\newenvironment{proof}{\noindent{\bf Proof:\/}}{\hfill $\Box$\vskip 0.1in}

\pagenumbering{arabic}

\begin{document}

\date{}

\title{Approximating minimum-cost edge-covers of crossing biset-families\thanks{Part of this paper 
appeared in the preliminary version \cite{N}.}}

\author{
Zeev Nutov \\
\small The Open University of Israel \\
\small {\tt nutov@openu.ac.il}
}

\maketitle

\begin{abstract}
An ordered pair $\hat{S}=(S,S^+)$ of subsets of a groundset $V$ is called a {\em biset} if $S \subseteq S^+$;
$(V \setminus S^+,V \setminus S)$ is the {\em co-biset} of $\hat{S}$.
Two bisets $\hat{X},\hat{Y}$ 
{\em intersect} if $X \cap Y \neq \emptyset$ and 
{\em cross} if both $X \cap Y \neq \emptyset$ and $X^+ \cup Y^+ \neq V$.
The intersection and the union of two bisets $\hat{X},\hat{Y}$ is defined by
$\hat{X} \cap \hat{Y} = (X \cap Y, X^+ \cap Y^+)$ and $\hat{X} \cup \hat{Y} = (X \cup Y,X^+ \cup Y^+)$.
A biset-family ${\cal F}$ is {\em crossing (intersecting)} if 
$\hat{X} \cap \hat{Y}, \hat{X} \cup \hat{Y} \in {\cal F}$ 
for any $\hat{X},\hat{Y} \in {\cal F}$ that cross (intersect). 
A directed edge covers a biset $\hat{S}$ if  it goes from $S$ to $V \setminus S^+$.
We consider the problem of covering a crossing biset-family ${\cal F}$ by a minimum-cost set of directed edges.
While for intersecting ${\cal F}$, a standard primal-dual algorithm computes an optimal solution,
the approximability of the case of crossing ${\cal F}$ is not
yet understood, as it includes several NP-hard problems, for which a poly-logarithmic approximation
was discovered only recently or is not known.
Let us say that a biset-family ${\cal F}$ is {\em $k$-regular} if 
$\hat{X} \cap \hat{Y}, \hat{X} \cup \hat{Y} \in {\cal F}$ for any $\hat{X},\hat{Y} \in {\cal F}$ 
with $|V \setminus (X \cup Y)| \geq k+1$ that intersect.
In this paper we obtain an $O(\log |V|)$-approximation algorithm for arbitrary crossing ${\cal F}$;
if in addition both ${\cal F}$ and the family of co-bisets of ${\cal F}$ are $k$-regular, 
our ratios are:
$O\left(\log \frac{|V|}{|V|-k} \right)$ if $|S^+ \setminus S|=k$ for all $\hat{S} \in {\cal F}$,
and 
$O\left(\frac{|V|}{|V|-k} \log \frac{|V|}{|V|-k}\right)$ 
if $|S^+ \setminus S| \leq k$ for all $\hat{S} \in {\cal F}$.
Using these generic algorithms, we derive for some network design problems 
the following approximation ratios:
$O\left(\log k \cdot \log\frac{n}{n-k}\right)$ for {\sf $k$-Connected Subgraph},  and 
$O(\log k) \cdot \min\{\frac{n}{n-k} \log \frac{n}{n-k},\log k\}$ 
for {\sf Subset $k$-Connected Subgraph} when all edges with positive cost have their endnodes in the subset.  
\end{abstract}

\section{Introduction}

\subsection{Problem definition and main results}

Following \cite{F-R}, an ordered pair $\hat{S}=(S,S^+)$ of subsets of a groundset $V$ is called a {\em biset} 
if $S \subseteq S^+$; $S$ is the {\em inner part} and $S^+$ is the {\em outer part} of $\hat{S}$,
and $\Gamma(\hat{S})=S^+ \setminus S$ is the {\em boundary} of $\hat{S}$. 
The {\em co-biset} of $\hat{S}$ is the biset $(V \setminus S^+,V \setminus S)$.
Any set $S$ can be considered as a biset $\hat{S}=(S,S)$ with $\Gamma(\hat{S})=\emptyset$. 

\begin{definition} \label{d:1}
Two bisets $\hat{X},\hat{Y}$ on $V$ 
{\em intersect} if $X \cap Y \neq \emptyset$ and are {\em disjoint} otherwise;
$\hat{X},\hat{Y}$ {\em cross} if $X \cap Y \neq \emptyset$ and $X^+ \cup Y^+ \neq V$.
The intersection and the union of bisets $\hat{X},\hat{Y}$ is defined by
$\hat{X} \cap \hat{Y} = (X \cap Y, X^+ \cap Y^+)$ and $\hat{X} \cup \hat{Y} = (X \cup Y,X^+ \cup Y^+)$.
We say that a biset-family ${\cal F}$ is: 
\begin{itemize}
\item
{\em crossing (intersecting)} if $\hat{X} \cap \hat{Y}, \hat{X} \cup \hat{Y} \in {\cal F}$ 
for any $\hat{X},\hat{Y} \in {\cal F}$ that cross (intersect).
\item
{\em $k$-regular} if $\hat{X} \cap \hat{Y}, \hat{X} \cup \hat{Y} \in {\cal F}$ 
for any $\hat{X},\hat{Y} \in {\cal F}$ with $|V \setminus (X \cup Y)| \geq k+1$ that intersect.
\end{itemize}
\end{definition}

A biset $\hat{S}$ is {\em proper} if $S,V \setminus S^+$ are both nonempty.
All biset-families in this paper are assumed to contain only proper bisets.
A directed edge $e$ leaves/covers a (proper) biset $\hat{S}$ if it goes from $S$ to $V \setminus S^+$.
An edge-set/graph $J$ is an {\em edge-cover} of ${\cal F}$ 
if every biset in ${\cal F}$ is covered by some edge in $J$.
We consider the following generic problem.

\vspace{0.1cm}

\begin{center} 
\fbox{
\begin{minipage}{0.965\textwidth}
\noindent
{\sf Biset-Family Edge-Cover}  \\ 
{\em Instance:} \ A directed graph $G=(V,E)$ with edge-costs $\{c_e:e \in E\}$ and a biset-family ${\cal F}$ on $V$. \\
{\em Objective:}  Find a minimum-cost edge-cover $J \subseteq E$ of ${\cal F}$.
\end{minipage}
}
\end{center}

\vspace{0.1cm}

Given two bisets $\hat{X},\hat{Y}$ we write $\hat{X} \subseteq \hat{Y}$ and 
say that $\hat{Y}$ contains $\hat{X}$ if $X \subseteq Y$
or if $X=Y$ and $X^+ \subseteq Y^+$; similarly, $\hat{X} \subset \hat{Y}$ and $\hat{Y}$ 
properly contains $\hat{X}$ if $X \subset Y$ or if $X=Y$ and $X^+ \subset Y^+$.

\begin{definition}
A biset $\hat{C} \in {\cal F}$ is a {\em core} of a biset-family ${\cal F}$,
or an {\em ${\cal F}$-core} for short, 
if $\hat{C}$ contains no biset in ${\cal F} \setminus \{\hat{C}\}$, namely,
if $\hat{C}$ is an inclusion-minimal member of ${\cal F}$.
Let ${\cal C}({\cal F})$ denote the family of ${\cal F}$-cores and let 
$\nu({\cal F})=|{\cal C}({\cal F})|$ denote the number of ${\cal F}$-cores.
\end{definition}

In the {\sf Biset-Family Edge-Cover} problem, ${\cal F}$ may not be given explicitly,
and a polynomial in $n=|V|$ implementation of our algorithms requires that certain queries related 
to ${\cal F}$ can be answered in polynomial time.
Given an edge set $J$ on $V$, the {\em residual family} ${\cal F}^J$
of ${\cal F}$ consists of all members of ${\cal F}$ that are uncovered by the edges of $J$.
It is known that if ${\cal F}$ is crossing or intersecting, so is ${\cal F}^J$, for any $J$.
The {\em co-family} of ${\cal F}$ is the biset-family 
$\{(V \setminus S^+, V\setminus S):(S,S^+) \in {\cal F}\}$ of co-bisets of the bisets in ${\cal F}$.
It is easy to see that ${\cal F}$ is crossing if, and only if, its co-family is crossing,
and that $J$ covers ${\cal F}$ if, and only if, the reverse edge-set of $J$ covers the co-family 
of ${\cal F}$. We assume that for any edge set $J$ on $V$ and any $u,v \in V$ we are able to compute 
in polynomial time the cores of the biset-family 
${\cal F}(u,v)=\{\hat{S} \in {\cal F}^J: u \in S, v \in V \setminus S^+\}$ 
and also the cores of its co-family, or to determine that ${\cal F}(u,v)$ is empty.
In specific graph problems we consider, this can be implemented in polynomial time using the 
Ford-Fulkerson Max-Flow Min-Cut Algorithm; we omit the somewhat standard implementation details.

For intersecting ${\cal F}$, {\sf Biset-Family Edge-Cover} can be solved in polynomial time by a 
standard primal-dual algorithm; in fact, even a more general problem of covering an intersecting supermodular 
biset-function by a digraph can also be solved in polynomial time \cite{F-R}. 
However, the case of crossing ${\cal F}$ includes the min-cost {\sf $k$-Connectivity Augmentation} problem
which is NP-hard, and its approxi\-mability is not yet understood.
Given a biset-family ${\cal F}$ let $\gamma({\cal F})=\max\limits_{\hat{S} \in {\cal F}}|\Gamma(\hat{S})|$.
It is known that any crossing set-family ${\cal F}$ 
(namely, a crossing biset-family ${\cal F}$ with $\gamma({\cal F})=0$) 
is decomposable into two intersecting families 
${\cal F}^{out}=\{\hat{S} \in {\cal F}:s \in S\}$ and 
${\cal F}^{in} =\{\hat{S} \in {\cal F}:s \in V \setminus S^+\}$, where $s \in V$ is arbitrary,
such that an edge-set $J$ covers ${\cal F}$ if, and only if, $J$ covers ${\cal F}^{out}$ 
and the reverse edge-set of $J$ covers ${\cal F}^{in }$.
This implies ratio $2$ for {\sf Biset-Family Edge-Cover} for crossing ${\cal F}$
with $\gamma({\cal F})=0$.
In a similar way we can decompose any crossing ${\cal F}$ into 
$2(\gamma({\cal F})+1)$ intersecting biset-families. 
This implies ratio $2(\gamma({\cal F})+1)$ for {\sf Biset-Family Edge-Cover} with crossing ${\cal F}$. 
Using ideas from \cite{RW,KN2,FL,N,N-subs}, we give approximation algorithms with logarithmic ratios.

For an edge-set or a graph $J$ and a biset $\hat{S}$ on $V$
let $\delta_J(\hat{S})$ denote the set of edges in $J$ covering $\hat{S}$. 
Let $\tau({\cal S})$ denote the optimal value of an LP-relaxation for covering
a biset-family ${\cal S}$, namely,
\[ \displaystyle
\begin{array} {llll} \mbox{\bf (P)}
& \tau({\cal S}) = & \min          & \ \ \displaystyle \sum_{e \in E} c_e x_e   \\
&                  & \ \mbox{s.t.} & \displaystyle \sum_{e \in \delta_E(\hat{S})} x_e \geq 1 
                                     \ \ \ \ \ \ \forall \hat{S} \in {\cal S}   \\
&                  &               & \ \ \ x_e \geq 0 \ \ \ \ \ \ \ \ \ \ \ \ \ \forall e \in E
\end{array}
\]

Let $H(n)=\sum_{i=1}^n (1/i)$ denote the $n$th harmonic number.
Our main result is the following.

\vspace{0.2cm}

\begin{theorem} \label{t:FL} 
{\sf Biset-Family Edge-Cover} with crossing ${\cal F}$ (and a directed graph $G$) 
admits a polynomial time algorithm that computes a solution of cost at most
$\rho \cdot \tau({\cal F})$ where: 
\begin{itemize}
\item[{\em (i)}]
$\rho= O(\log \nu({\cal F}))=O(\log n)$ for arbitrary crossing ${\cal F}$. 
\item[{\em (ii)}]
$\rho=O\left(\log \min\left\{\frac{n}{n-k},n-k\right\}\right)$
if both ${\cal F}$ and the co-family of ${\cal F}$ are $k$-regular 
and if $|\Gamma(\hat{S})|=k$ for all $\hat{S} \in {\cal F}$.
\item[{\em (iii)}]
$\rho=O\left(\frac{n}{n-k} \log \frac{n}{n-k}\right)$ if both ${\cal F}$ 
and the co-family of ${\cal F}$ are $k$-regular and if $\gamma({\cal F}) \leq k$.
\end{itemize}
\end{theorem}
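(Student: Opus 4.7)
The plan is a phased augmentation algorithm. Starting from $J_0=\emptyset$, in each phase $t$ I compute an augmentation $E_t$ of cost $O(\tau({\cal F}^{J_t})) \leq O(\tau({\cal F}))$ that substantially shrinks the set ${\cal C}({\cal F}^{J_{t+1}})$ of residual cores. Since ${\cal F}^J$ remains crossing (and, in the regular cases, preserves $k$-regularity) for every $J$, the total cost is (number of phases) $\times\, O(\tau({\cal F}))$; bounding the phase count then yields the three ratios claimed in (i)--(iii).

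For (i), I would first exploit a structural property of cores in a crossing family: no two \emph{distinct} cores $\hat{C}_1, \hat{C}_2$ can cross, for otherwise $\hat{C}_1 \cap \hat{C}_2 \in {\cal F}$ would be strictly contained in each, violating minimality. Hence, for any pair of distinct cores, either their inner parts are disjoint or $C_1^+ \cup C_2^+ = V$, a laminar-type condition which in particular gives $\nu({\cal F}) = n^{O(1)}$. The per-phase subroutine then builds on the $2(\gamma({\cal F})+1)$-way decomposition of ${\cal F}^{J_t}$ into intersecting subfamilies recalled in the introduction --- each exactly coverable in polynomial time via \cite{F-R} --- and argues via an LP/density analysis in the spirit of \cite{RW, KN2, FL, N, N-subs} that the cheapest such cover captures a constant fraction of ${\cal C}({\cal F}^{J_t})$ at LP cost. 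Iterating halves $\nu$ every $O(1)$ phases, so $O(\log \nu({\cal F}))$ phases suffice.

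For (ii) and (iii), the $k$-regularity of both ${\cal F}$ and its co-family tightens the core structure further. Since intersecting cores whose union leaves $\geq k+1$ vertices outside can be merged, every pair of non-mergeable cores must sit in a restricted ``deep'' region of effective size $n-k$; this refines the polynomial bound on $\nu({\cal F})$ to one of order $n/(n-k)$, and correspondingly sharpens the halving rate. Under the additional rigidity $|\Gamma(\hat S)|=k$ in (ii), applying the same argument to the co-family produces the alternative $\log(n-k)$ bound and hence the claimed minimum; in (iii), the weaker hypothesis $\gamma({\cal F}) \leq k$ weakens the per-phase progress by a factor $n/(n-k)$, matching the stated ratio.

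The principal obstacle is the per-phase halving claim itself. Merely partitioning ${\cal F}^{J_t}$ into $2(\gamma({\cal F})+1)$ intersecting pieces and optimally covering one of them only yields an $O(\gamma+1)$-approximation, not $O(\log n)$. Closing this gap demands an LP-guided choice of which intersecting subfamily to target, so that the covered cores form a constant fraction of ${\cal C}({\cal F}^{J_t})$ while the augmentation cost is kept at $O(\tau({\cal F}^{J_t}))$. Adapting the density / set-cover-style analyses of \cite{RW, KN2, FL, N, N-subs} from set families to biset families, and verifying that the structural hypotheses --- crossing, $k$-regular, $|\Gamma(\hat S)|=k$ or $\gamma \leq k$ --- are preserved along the residuals ${\cal F}^{J_t}$ so that the argument iterates consistently, is the technical heart of the proof.
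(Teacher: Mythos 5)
Your high-level shape --- iterated phases, each of cost $O(\tau({\cal F}))$, that shrink the residual family --- is in the right family of arguments, but you candidly flag the missing step and it is indeed the heart of the matter. In fact the gap is not one that can be closed by ``a density analysis'' grafted onto the $2(\gamma({\cal F})+1)$-way decomposition: that decomposition, which splits ${\cal F}$ according to which side of a fixed node $s$ the bisets lie, has no useful interaction with the cores of ${\cal F}$, so optimally covering one of its pieces gives no control over how many cores survive. The paper instead decomposes \emph{by core}: for each ${\cal F}$-core $\hat{C}$ it considers the ``halo'' ${\cal F}(\hat{C})$ of bisets containing $\hat{C}$ and no other core, proves that the co-family of ${\cal F}(\hat{C})$ is intersecting (so ${\cal F}(\hat{C})$ is \emph{optimally} coverable by a primal-dual subroutine), and proves that bisets from distinct halos cannot cross (so no edge serves two halos, whence $\sum_{\hat{C}}\tau({\cal F}(\hat{C}))\le\tau({\cal F})$). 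Part~(i) then does \emph{not} halve $\nu$; it removes one core per iteration, covering the cheapest halo at cost $\le\tau({\cal F})/\nu({\cal F}^J)$, and the telescoping gives $H(\nu({\cal F}))\cdot\tau({\cal F})$. Your halving intuition would require a constant-fraction-of-cores argument which is precisely what is absent.

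For parts~(ii) and~(iii) your proposed mechanism is also off. The claim that $k$-regularity ``refines the polynomial bound on $\nu({\cal F})$ to one of order $n/(n-k)$'' is false: $k$-regularity only constrains \emph{intersecting} bisets, while distinct cores are pairwise disjoint, so a $k$-regular crossing family can have $\Theta(n)$ singleton-inner-part cores. What the paper actually proves is that after spending cost $\le\tau({\cal F})$ the \emph{residual} number of cores is $O(n/(n-k))$: it restricts to the subfamily ${\cal S}=\{\hat{S}\in{\cal F}:|S|\le(n-k)/2\}$, shows (Fact~\ref{f:q}) that $k$-regularity makes this subfamily $q$-semi-intersecting for $q=(n-k)/2$, and then invokes Theorem~\ref{t:q-int} --- a nontrivial modification of the primal-dual algorithm, proved in Section~3 and the real technical core of the paper --- to reduce $\nu({\cal S}^J)$ to $\lfloor n/(q+1)\rfloor$ at cost $\le\tau({\cal S})$; the same is done on the co-family, and then part~(i) finishes. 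None of this appears in your sketch. Part~(iii) is likewise not a ``per-phase progress weakened by $n/(n-k)$'' variant of~(ii); after the part-(ii) phase has cleaned out the small-inner-part bisets, the residual family has all inner parts of size $>(n-k)/2$ on both sides, and the paper closes with a hitting-set argument (Lemmas~\ref{l:q} and~\ref{l:lov}): a fractional hitting set of value $n/q$ and maximum hypergraph degree $\lfloor n/q\rfloor$ yield a transversal $T$ of size $(n/q)H(\lfloor n/q\rfloor)$ of the inner parts of cores, and for each $s\in T$ the family $\{\hat{S}\in{\cal F}:s\in S\}$ is optimally coverable because its co-family is intersecting. That combinatorial hitting-set step is entirely missing from your proposal.
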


We note that Theorem~\ref{t:FL} easily extends to the undirected case,
with a loss of a factor of $2$ in the approximation ratio.

\subsection{Related work and applications}

A directed/undirected graph is {\em $k$-connected} if there are
$k$ internally-disjoint paths from every its node to the other.
A fundamental problem in network design is the following: 

\vspace{0.1cm}

\begin{center} 
\fbox{
\begin{minipage}{0.965\textwidth}
\noindent
{\sf $k$-Connected Subgraph} \\
{\em Instance:}
\ A graph $G=(V,E)$ with edge-costs $\{c_e:e \in E\}$ and an integer $k$. \\
{\em Objective:}
Find a minimum cost $k$-connected spanning subgraph of $G$.
\end{minipage}
}
\end{center}

\vspace{0.1cm}

We refer the reader to \cite{N,FL,KN-sur} for a history of the problem.
Let the {\sf $\ell$-Connectivity Augmentation} problem be the restriction of 
{\sf $k$-Connected Subgraph} to instances in which $G$ contains an $\ell$-connected spanning 
subgraph $G_0$ of cost zero, and we seek to increase at minimum cost
the connectivity of $G_0$ from $\ell=k-1$ to $\ell+1=k$. 
The following statement is known, and its parts were implicitly proved 
in \cite{FJ} and \cite{RW}, see also \cite{KN2}.

\begin{proposition} \label{p:RW}
{\sf $\ell$-Connectivity Augmentation} is a particular case of {\sf Biset-Family Edge-Cover} with crossing 
${\cal F}$, such that $|\Gamma(\hat{S})|=\ell$ for all $\hat{S} \in {\cal F}$, 
and such that both ${\cal F}$ and the co-family of ${\cal F}$ are $\ell$-regular.
Furthermore, if the latter problem admits a polynomial time algorithm that computes a solution
of cost $\leq \alpha(n,\ell) \cdot \tau({\cal F})$, then {\sf $k$-Connected Subgraph}
admits a polynomial time algorithm that computes a solution 
of cost $\leq {\sf opt}_k \cdot \sum_{\ell=1}^k \frac{\alpha(n,\ell)}{k-\ell+1}$.
In particular, if $\alpha(n,\ell)$ is increasing in $\ell$ then the cost of the solution
computed $\leq {\sf opt}_k \cdot \alpha(n,k) \cdot H(k)$, where 
${\sf opt}_k=\min\{x(E): 
  x(\delta_E(\hat{S})) \geq k-|\Gamma(\hat{S})| \ \forall \mbox{ proper biset } \hat{S} \mbox{ on } V\}$
is the optimal value of a natural LP-relaxation for the problem.
\end{proposition}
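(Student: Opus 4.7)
My plan is to prove the two assertions separately. For the biset-cover reduction, I would use the biset form of Menger's theorem: a directed graph $G$ is $k$-connected iff $|\delta_G(\hat{S})| + |\Gamma(\hat{S})| \geq k$ for every proper biset $\hat{S}$ on $V$. Given the $\ell$-connected zero-cost subgraph $G_0$, set
\[
{\cal F}=\{\hat{S} \text{ proper}\ :\ |\Gamma(\hat{S})|=\ell,\ \delta_{G_0}(\hat{S})=\emptyset\}.
\]
I would then check that $J$ covers ${\cal F}$ iff $G_0+J$ is $(\ell+1)$-connected: any proper biset $\hat{S}$ with $|\delta_{G_0+J}(\hat{S})|+|\Gamma(\hat{S})|\leq \ell$ must, by $\ell$-connectivity of $G_0$, satisfy $\delta_{G_0}(\hat{S})=\emptyset$ and $|\Gamma(\hat{S})|=\ell$, and is uncovered by $J$; conversely, any $\hat{S}\in{\cal F}$ has $|\delta_{G_0}(\hat{S})|+|\Gamma(\hat{S})|=\ell$ and so requires an edge of $J$. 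The defining condition $|\Gamma(\hat{S})|=\ell$ for all $\hat{S}\in{\cal F}$ is built in.

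Next I would verify the structural hypotheses. Fix $\hat{X},\hat{Y}\in{\cal F}$ that intersect. Submodularity of the directed cut function $|\delta_{G_0}(\cdot)|$ gives $\delta_{G_0}(\hat{X}\cap\hat{Y})=\delta_{G_0}(\hat{X}\cup\hat{Y})=\emptyset$, and modularity of $|\Gamma(\cdot)|$ gives $|\Gamma(\hat{X}\cap\hat{Y})|+|\Gamma(\hat{X}\cup\hat{Y})|=2\ell$. The intersection $\hat{X}\cap\hat{Y}$ is always proper under ``intersect'', so $\ell$-connectivity of $G_0$ forces $|\Gamma(\hat{X}\cap\hat{Y})|\geq\ell$, hence $|\Gamma(\hat{X}\cup\hat{Y})|\leq\ell$. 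The only remaining issue is whether $\hat{X}\cup\hat{Y}$ is proper: in the crossing case this is given, and in the $\ell$-regular case the count
\[
|V\setminus(X^+\cup Y^+)|\;\geq\; |V\setminus(X\cup Y)|-|\Gamma(\hat{X}\cup\hat{Y})|\;\geq\;(\ell+1)-\ell\;=\;1
\]
settles it. Applying $\ell$-connectivity to $\hat{X}\cup\hat{Y}$ now forces $|\Gamma(\hat{X}\cup\hat{Y})|\geq\ell$, so by modularity both boundaries equal $\ell$ and both bisets lie in ${\cal F}$. The co-family is handled by the same argument applied to the reverse graph of $G_0$, which is itself $\ell$-connected for directed $k$-connectivity.

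For the reduction to {\sf $k$-Connected Subgraph}, I would iteratively invoke the augmentation algorithm to produce graphs $J_0\subseteq J_1\subseteq\cdots\subseteq J_k$ with $J_\ell$ being $\ell$-connected and with $J_{\ell+1}\setminus J_\ell$ obtained from the $\ell$-Connectivity Augmentation algorithm on the residual instance, at cost at most $\alpha(n,\ell)\cdot\tau({\cal F}_\ell)$. The LP-bounding step is: if $x^*$ is an optimal fractional solution of the $k$-connectivity LP with $c\cdot x^*={\sf opt}_k$, then $x^*(\delta_E(\hat{S}))\geq k-\ell$ on every $\hat{S}$ with $|\Gamma(\hat{S})|=\ell$, so $x^*/(k-\ell)$ is fractionally feasible for the $\ell$-th augmentation LP, giving $\tau({\cal F}_\ell)\leq {\sf opt}_k/(k-\ell)$. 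Summing the per-step bounds and re-indexing yields the claimed total ${\sf opt}_k\cdot\sum_{\ell=1}^{k}\alpha(n,\ell)/(k-\ell+1)$; when $\alpha(n,\cdot)$ is nondecreasing, factoring out $\alpha(n,k)$ leaves the harmonic sum $H(k)$, proving the ``in particular'' clause. I expect the propriety argument for $\hat{X}\cup\hat{Y}$ under the weaker $\ell$-regular hypothesis is the only delicate step; the rest reduces to standard submodular/modular manipulations and bookkeeping.
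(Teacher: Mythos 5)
The paper does not prove Proposition~\ref{p:RW} itself; it explicitly defers to \cite{FJ}, \cite{RW}, and \cite{KN2}, where ``its parts were implicitly proved.'' So there is no in-paper proof to compare against, and the proposal must stand on its own.

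Your structural and LP-accounting parts are fine: the modularity of $|\Gamma(\cdot)|$ and biset-submodularity of $|\delta_{G_0}(\cdot)|$, the properness count under $\ell$-regularity, the co-family/reverse-graph symmetry, and the $\tau({\cal F}_\ell) \le {\sf opt}_k/(k-\ell)$ scaling are all standard and correct (modulo a harmless $\alpha(n,\ell{-}1)$ vs.\ $\alpha(n,\ell)$ index shift that is absorbed by monotonicity).

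However, the reduction step contains a genuine gap. You assert that if $\hat{S}$ is a proper biset with $|\delta_{G_0+J}(\hat{S})|+|\Gamma(\hat{S})|\le \ell$ then $\ell$-connectivity of $G_0$ forces $\delta_{G_0}(\hat{S})=\emptyset$ and $|\Gamma(\hat{S})|=\ell$. This is false: $\ell$-connectivity only gives the inequality $|\delta_{G_0}(\hat{S})|+|\Gamma(\hat{S})|\ge\ell$, so what follows is the tightness $|\delta_{G_0}(\hat{S})|+|\Gamma(\hat{S})|=\ell$ (and $\delta_J(\hat{S})=\emptyset$). Nothing rules out $|\delta_{G_0}(\hat{S})|=a>0$ with $|\Gamma(\hat{S})|=\ell-a$; such a biset is deficient, must be covered by $J$ for $G_0+J$ to be $(\ell+1)$-connected, but is not in your ${\cal F}$. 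So ``$J$ covers ${\cal F}$'' does not immediately imply ``$G_0+J$ is $(\ell+1)$-connected.'' (As a toy example, with $\ell=1$ and $G_0$ a directed cycle, the biset $(S,S)$ with exactly one $G_0$-edge leaving $S$ has $\Gamma=\emptyset$ and is a tight, deficient biset outside ${\cal F}$.)

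What is missing is the standard \emph{shifting} (deficiency-reduction) argument: for any tight biset $\hat{S}$ with $|\delta_{G_0}(\hat{S})|=a>0$, repeatedly move the tail (or the head, if shrinking $S$ would make it empty) of an edge $e\in\delta_{G_0}(\hat{S})$ into the boundary. Each move produces a biset $\hat{S}'$ with $S'\subseteq S$ and $S'^+\supseteq S^+$ --- so every edge covering $\hat{S}'$ also covers $\hat{S}$ --- decreases $|\delta_{G_0}|$ by exactly one (by $\ell$-connectivity) and increases $|\Gamma|$ by one, preserving tightness and properness. After $a$ steps one arrives at a biset in ${\cal F}$ dominating $\hat{S}$, which is what makes ``$J$ covers ${\cal F}$'' suffice. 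Equivalently, one may argue by taking a tight biset with maximal boundary and showing it must have $\delta_{G_0+J}=\emptyset$. Without some such argument, the equivalence ``$J$ covers ${\cal F}$ iff $G_0+J$ is $(\ell+1)$-connected'' does not hold as written, and the rest of the reduction is built on it.
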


Thus part~(ii) of Theorem~\ref{t:FL} implies the following result from \cite{N}.

\begin{corollary} [\cite{N}] \label{c:aug}
{\sf $k$-Connectivity Augmentation} admits a polynomial time algorithm 
that computes a solution of cost 
$O\left(\log \frac{n}{n-k} \right) \cdot {\sf opt}_k$;
the approximation ratio is $O(1)$, unless $k=n-o(n)$.
The problem also admits a polynomial time algorithm that computes a solution of cost
$O(\log(n-k))\cdot {\sf opt}_k$.
{\sf $k$-Connected Subgraph} admits an 
$O\left(\log k \cdot \log \frac{n}{n-k} \right)$-approximation algorithm;
the ratio is $O(\log k)$, unless $k=n-o(n)$.
The problem also admits a polynomial time algorithm that computes a solution of cost
$O\left(\sum_{\ell=1}^k \frac{\log (n-\ell)}{k-\ell+1}\right) \cdot {\sf opt}_k$.
\end{corollary}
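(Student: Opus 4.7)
The plan is essentially to combine Theorem~\ref{t:FL}(ii) (the biset-cover approximation ratio) with Proposition~\ref{p:RW} (the reduction from connectivity problems to biset-cover and the harmonic-aggregation trick). Both the augmentation claims and the $k$-connected subgraph claims will then fall out by bookkeeping, with no additional combinatorial work.

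For {\sf $k$-Connectivity Augmentation}, i.e.\ augmenting a $(k-1)$-connected spanning subgraph $G_0$ of cost $0$ to a $k$-connected one, Proposition~\ref{p:RW} identifies the problem with {\sf Biset-Family Edge-Cover} on a crossing family ${\cal F}$ in which every biset has boundary of size exactly $k-1$ and both ${\cal F}$ and its co-family are $(k-1)$-regular. This is the precise hypothesis of Theorem~\ref{t:FL}(ii), applied with parameter $k-1$, so we obtain an approximation ratio of $O(\log\min\{n/(n-k+1),\,n-k+1\})$. Absorbing the off-by-one into the $O$-notation yields the claimed $O(\log(n/(n-k)))$ and $O(\log(n-k))$ bounds, and the $O(1)$ regime when $k=n-o(n)$ is immediate from $n/(n-k)=O(1)$. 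Since $G_0$ has zero cost, the LP value $\tau({\cal F})$ of this instance coincides with ${\sf opt}_k$ of the original problem, so the bounds are against ${\sf opt}_k$ as stated.

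For {\sf $k$-Connected Subgraph}, I would invoke the aggregation formula in Proposition~\ref{p:RW}: iterating the $\ell$-connectivity augmentation routine delivers a solution of cost at most ${\sf opt}_k \cdot \sum_{\ell=1}^{k}\alpha(n,\ell)/(k-\ell+1)$, where $\alpha(n,\ell)$ is the ratio available for augmenting from $\ell$ to $\ell+1$. Taking $\alpha(n,\ell)=O(\log(n/(n-\ell)))$, which is increasing in $\ell$, the simplified bound in the proposition gives $O(\alpha(n,k)\cdot H(k))=O(\log k \cdot \log(n/(n-k)))$, and this simplifies to $O(\log k)$ when $k=n-o(n)$. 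For the alternative bound I substitute $\alpha(n,\ell)=O(\log(n-\ell))$ coming from the other half of the $\min$ in Theorem~\ref{t:FL}(ii); here $\alpha$ is decreasing in $\ell$, so the closed-form simplification does not apply and the sum $\sum_{\ell=1}^k \log(n-\ell)/(k-\ell+1)$ must be left as stated.

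The only mildly subtle point in the whole derivation is the index shift between the augmentation \emph{target} $k$ and the boundary size $\ell=k-1$ of its bisets in Theorem~\ref{t:FL}(ii); this is harmless because $\log(n/(n-k+1))=O(\log(n/(n-k)))$ whenever $n-k\geq 1$, and for $k=n$ the augmentation problem is trivial. No other obstacle is expected, as the heavy lifting is done by the two results already available.
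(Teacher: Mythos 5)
Your derivation matches the paper's intent exactly: the corollary is stated without a written proof, with only the remark ``Thus part~(ii) of Theorem~\ref{t:FL} implies the following result from \cite{N},'' and the implicit chain of reasoning is precisely the one you spell out---plug the ratio from Theorem~\ref{t:FL}(ii) into Proposition~\ref{p:RW} as $\alpha(n,\ell)$, once for each half of the $\min$, then apply either the closed-form $\alpha(n,k)\cdot H(k)$ bound or the raw sum $\sum_{\ell=1}^k \alpha(n,\ell)/(k-\ell+1)$ depending on whether $\alpha$ is increasing in $\ell$. The bookkeeping, including the harmless off-by-one shift between the target connectivity $k$ and the boundary size of the bisets, is handled correctly.
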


Now let us consider the following known generalization of the {\sf $k$-Connected Subgraph} problem.
Let us say that a subset $T$ of nodes of a directed/undirected graph is $k$-connected if in the graph 
there are $k$ internally-disjoint paths from every node in $T$ to any other node in $T$.

\vspace{0.1cm}

\begin{center} 
\fbox{
\begin{minipage}{0.965\textwidth}
\noindent
{\sf Subset $k$-Connected Subgraph} \\
{\em Instance:}
\ A graph $G=(V,E)$ with edge-costs $\{c_e:e \in E\}$, $T \subseteq V$, and an integer $k$. \\
{\em Objective:}
Find a minimum cost subgraph of $G$ in which $T$ is $k$-connected.
\end{minipage}
}
\end{center}

\vspace{0.1cm}

Note that the {\sf $k$-Connected Subgraph} problem is a particular case of 
{\sf Subset $k$-Connected Subgraph} when $T=V$.
Let the {\sf Subset $\ell$-Connectivity Augmentation} problem be the restriction of 
{\sf Subset $k$-Connected Subgraph} to instances in which $G$ contains a 
subgraph $G_0$ of cost zero such that $T$ is $\ell$-connected in $G_0$,
and we seek to increase at minimum cost
the connectivity of $T$ from $\ell=k-1$ to $\ell+1=k$. 
When the costs are arbitrary, {\sf Subset $k$-Connected Subgraph} is unlikely to admit 
a polylogarithmic approximation \cite{KKL} (see also \cite{LN} for a simpler proof).
The currently best known ratio for this problem for $|T|>k$ is
$b(k+\rho)+\frac{|T|^2}{{(|T|-k)}^2} O\left(\log \frac{|T|}{|T|-k}\right)$,
where $b=1$ for undirected graphs and $b=2$ for directed graphs, and
$\rho$ is the ratio for the rooted version of the problem \cite{L-subs,N-subs}; 
currently, $\rho=\min\{\tilde{O}(k),|T|\}$ for undirected graphs \cite{N-focs}, and 
$\rho=|T|$ for directed graphs. For $|T| \leq k$ the best ratio is $\frac{b}{2}|T|(|T|-1)$.
We consider the version of the problem when every edge with positive cost has its both endnodes
in $T$. Then a similar statement to the one in Proposition~\ref{p:RW} applies, 
except that ${\cal F}$ is a biset-family on $T$ and $|\Gamma(\hat{S})| \leq \ell$ for all $\hat{S} \in {\cal F}$.
Furthermore, when $|T|>k$, then by applying $b$ times an approximation algorithm for
the rooted version of the problem, we can reduce the number of cores to $O(k^2)$;
such a procedure is described in \cite{L-subs,N-subs}.
The rooted version when every edge has its tail in $T$ admits a polynomial time algorithm \cite{F-R}.
Thus parts (i) and (iii) of Theorem~\ref{t:FL} imply the following.

\begin{corollary} \label{t:aug'}
In the case when every edge with positive cost has its both endnodes in $T$,
{\sf Subset $k$-Connectivity Augmentation} admit a polynomial time algorithm that computes a solution 
of cost $\rho \cdot {\sf opt}_k$, where 
$\rho=
O\left(\min\left\{\frac{|T|}{\max\{|T|-k,1\}} \log \frac{|T|}{\max\{|T|-k,1\}},\log \min\{k,|T|\}\right\}\right)$
and ${\sf opt}_k=\min\{x(E): x(\delta_E(S)) \geq k-|\Gamma(\hat{S})| \ 
\forall \mbox{ biset } \hat{S} \mbox{ on } V \mbox{ with } 
S \cap T,T \setminus S^+ \neq \emptyset\}$.
{\sf Subset $k$-Connected Subgraph} admits an $O(\rho \log k)$-approximation algorithm.
\end{corollary}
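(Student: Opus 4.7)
The plan is to derive Corollary~\ref{t:aug'} by combining Theorem~\ref{t:FL} with the reduction framework sketched in the paragraph preceding the statement, which is a subset-connectivity analogue of Proposition~\ref{p:RW}. The first step is to verify that {\sf Subset $\ell$-Connectivity Augmentation}, under the assumption that every positive-cost edge has both endnodes in $T$, reduces to an instance of {\sf Biset-Family Edge-Cover} on the ground set $T$ where ${\cal F}$ is crossing, both ${\cal F}$ and its co-family are $\ell$-regular, and $|\Gamma(\hat{S})|\le \ell$ for every $\hat{S}\in{\cal F}$. The relevant bisets are the ``deficient'' ones for $(\ell+1)$-connectivity of $T$: for a biset $\hat{S}$ with $S\cap T, T\setminus S^+\neq\emptyset$, one takes $\hat{S}\in{\cal F}$ iff no edge of $G_0$ crosses $\hat{S}$, and then $|\Gamma(\hat{S})|$ is at most $\ell$ by Menger's theorem. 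Crossing and $\ell$-regularity of both ${\cal F}$ and its co-family are routine biset submodularity arguments, identical in spirit to the $V=T$ case of Proposition~\ref{p:RW}.

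The second step is to plug this into Theorem~\ref{t:FL}. Part~(iii) applies with $n$ replaced by $|T|$ and $k$ by $\ell$, giving ratio $O\!\left(\frac{|T|}{|T|-\ell}\log\frac{|T|}{|T|-\ell}\right)$ whenever $|T|>\ell$. Part~(i) gives a universal $O(\log\nu({\cal F}))=O(\log|T|)$ bound. To obtain the $O(\log\min\{k,|T|\})$ alternative in $\rho$, I invoke the core-reduction trick of \cite{L-subs,N-subs}: when $|T|>k$, $b$ applications of the polynomial algorithm for the rooted version (which exists by \cite{F-R}) produce a partial solution of cost $O({\sf opt}_k)$ whose residual biset-family has only $O(k^2)$ cores; then Theorem~\ref{t:FL}(i) on the residual family costs $O(\log k)$ times ${\sf opt}_k$. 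When $|T|\le k$ the $O(\log|T|)$ bound from part~(i) is used directly. Taking the minimum of these two bounds, and writing the denominator as $\max\{|T|-k,1\}$ to uniformly handle $|T|\le k$, yields exactly the stated value of $\rho$ for {\sf Subset $k$-Connectivity Augmentation}.

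The third step is the augmentation-to-subgraph reduction: as in Proposition~\ref{p:RW}, one builds a $k$-connected subgraph by iteratively augmenting connectivity from $\ell$ to $\ell+1$ for $\ell=0,1,\ldots,k-1$, paying at each step the augmentation ratio against ${\sf opt}_k/(k-\ell+1)$. Since $\rho(\ell)$ as a function of $\ell$ is increasing (both $\frac{|T|}{|T|-\ell}\log\frac{|T|}{|T|-\ell}$ and $\log\min\{\ell,|T|\}$ are monotone nondecreasing in $\ell$ up to constants), the usual harmonic argument $\sum_{\ell=1}^k\frac{1}{k-\ell+1}=H(k)$ produces the bound $O(\rho\log k)$ for {\sf Subset $k$-Connected Subgraph}.

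The main obstacle, and the only ingredient not directly available from Theorem~\ref{t:FL} or Proposition~\ref{p:RW}, is the core-reduction step needed to get the $O(\log\min\{k,|T|\})$ branch of $\rho$ when $|T|$ is much larger than $k$. Its correctness depends on establishing that after $b$ applications of the rooted approximation the number of remaining ${\cal F}$-cores drops from potentially $\Theta(|T|^2)$ to $O(k^2)$; I would invoke this verbatim from \cite{L-subs,N-subs} rather than reprove it here, noting only that the $b$ factor absorbs into the constant hidden by $O(\cdot)$ and that the rooted algorithm of \cite{F-R} is exact so the partial cost is at most $O({\sf opt}_k)$. Everything else is bookkeeping around the two parts of Theorem~\ref{t:FL}.
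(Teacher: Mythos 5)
Your proposal follows essentially the same route the paper takes: the subset analogue of Proposition~\ref{p:RW} (biset-family on $T$, crossing, $\ell$-regular in both directions, $|\Gamma(\hat{S})|\le\ell$), then Theorem~\ref{t:FL}(iii) for the $\frac{|T|}{|T|-k}\log\frac{|T|}{|T|-k}$ branch, the $O(k^2)$ core-reduction via $b$ exact rooted solves from \cite{L-subs,N-subs,F-R} plus Theorem~\ref{t:FL}(i) for the $\log\min\{k,|T|\}$ branch, and finally the iterative $\ell\to\ell+1$ augmentation with the harmonic-sum accounting for the $O(\rho\log k)$ ratio on {\sf Subset $k$-Connected Subgraph}. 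This is exactly what the paper's terse derivation says, with the details filled in the expected way.
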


Cheriyan and Laekhanukit \cite{ChL} considered the following directed edge-connectivity problem,
that gene\-ra\-lizes the {\sf Subset $k$-Connected Subgraph} problem.
Given two disjoin subsets $S,T$ in a graph $G$, we say that $G$ is $k$-edge-outconnected from $S$ to $T$,
or that $G$ is $k$-$(S,T)$-edge-connected,
if $G$ has $k$ edge-disjoint $st$-paths for every $(s,t) \in S \times T$. 

\begin{center} 
\fbox{
\begin{minipage}{0.965\textwidth}
\noindent
{\sf $k$-$(S,T)$-Edge-Connected Subgraph} \\
{\em Instance:}
\ A directed graph $G=(V,E)$ with edge-costs $\{c_e:e \in E\}$, disjoint subsets $S,T \subseteq V$, 
and an integer $k$. \\
{\em Objective:}
Find a minimum cost $k$-edge-outconnected from $S$ to $T$ subgraph of $G$.
\end{minipage}
}
\end{center}

In the so called ``standard version'' of the problem we have $E \subseteq S \times T$.
Let the {\sf $\ell$-$(S,T)$-Edge-Connectivity Augmentation} problem be the restriction of 
{\sf $k$-$(S,T)$-Edge-Connected Subgraph} to instances in which $G$ contains a
subgraph $G_0$ of cost zero such that $G_0$ is $\ell$-$(S,T)$-connected in $G_0$,
and we seek to increase at minimum cost the $(S,T)$-connectivity from $\ell=k-1$ to $\ell+1=k$. 

Let us say that two sets $X,Y$ {\em $(S,T)$-cross} if $X \cap Y \cap S, T \setminus (X \cup Y) \neq \emptyset$.
A set-family ${\cal F}$ is {\em $(S,T)$-crossing} 
if $X \cap Y, X \cup Y \in {\cal F}$ for any $X,Y \in {\cal F}$ that $(S,T)$-cross.
Generalizing the algorithm of Fackaroenphol and Laekhanukit \cite{FL} for the 
{\sf $k$-Connected Subgraph} problem, 
Cheriyan and Laekhanukit \cite{ChL} gave an approximation
algorithm with ratio $O(\log \nu({\cal F}))=O(\log n)$ 
for the standard version of the {\sf $k$-$(S,T)$-Connectivity Augmentation} problem.
They also implicitly proved that it is a particular case of the {\sf Set-Family Edge-Cover} problem 
with $V=S \cup T$, $E \subseteq S \times T$, and $(S,T)$-crossing ${\cal F}$.
Our algorithm in Theorem~\ref{t:FL}(i) easily extends to the problem 
of covering an $(S,T)$-crossing family by a minimum-cost edge-set. 
Here we preferred the biset-family setting for simplicity of exposition, and since
the concept of $k$-regularity is not a natural one for $(S,T)$-crossing families.
Furthermore, the case of an $(S,T)$-crossing set family ${\cal F}$ 
is reduced to the case of a crossing biset-family ${\cal F}'$,
where for every set $X \in {\cal F}$ there is a biset $\hat{X}'=(X \cap S, S \cup (X \cap T))$ in ${\cal F}'$;
it is not hard to verify that if ${\cal F}$ is an $(S,T)$-crossing family then ${\cal F}'$ is 
a crossing biset-family, and that an edge from $S$ to $T$ covers a set $X$ if, and only if, it covers $\hat{X}'$. 
Thus from Theorem~\ref{t:FL}(i) we have the following generalization of the result of \cite{ChL}. 

\begin{corollary}
{\sf Set-Family Edge-Cover} with $(S,T)$-crossing ${\cal F}$ and $E \subseteq S \times T$ 
admits a polynomial time algorithm that computes a solution of cost 
$O(\log |S \cup T|) \cdot \tau({\cal F})$. 
\end{corollary}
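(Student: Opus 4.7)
The plan is to execute the reduction sketched in the paragraph preceding the corollary, namely to transform the $(S,T)$-crossing set-family ${\cal F}$ into a crossing biset-family ${\cal F}'$ on $V=S\cup T$ and then invoke Theorem~\ref{t:FL}(i). Concretely, for every $X\in{\cal F}$ define $\hat{X}'=(X\cap S,\, S\cup(X\cap T))$ and let ${\cal F}'=\{\hat{X}':X\in{\cal F}\}$. The first step is to check that the members of ${\cal F}'$ are proper bisets, which holds because an $(S,T)$-crossing family consists (implicitly) of sets with $X\cap S\neq\emptyset$ and $T\setminus X\neq\emptyset$; note $V\setminus(S\cup(X\cap T))=T\setminus X$.

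Next I would verify the two compatibility properties the reduction rests on. For the edge-cover equivalence, an edge $e=(s,t)\in S\times T$ covers $X$ iff $s\in X$ and $t\notin X$; since $s\in S$ and $t\in T$ with $S\cap T=\emptyset$, this is equivalent to $s\in X\cap S$ and $t\in T\setminus X=V\setminus(S\cup(X\cap T))$, i.e.\ $e$ covers $\hat{X}'$. In particular, the LP~(P) for ${\cal F}$ and for ${\cal F}'$ have identical constraint systems, so $\tau({\cal F})=\tau({\cal F}')$. For the crossing property, suppose $\hat{X}',\hat{Y}'\in{\cal F}'$ cross: their inner parts intersect in $X\cap Y\cap S\neq\emptyset$, and $(S\cup(X\cap T))\cup(S\cup(Y\cap T))=S\cup((X\cup Y)\cap T)\neq V$ gives $T\setminus(X\cup Y)\neq\emptyset$. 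Thus $X,Y$ are $(S,T)$-crossing, so $X\cap Y,X\cup Y\in{\cal F}$. A direct set-algebra computation shows $\widehat{X\cap Y}'=\hat{X}'\cap\hat{Y}'$ and $\widehat{X\cup Y}'=\hat{X}'\cup\hat{Y}'$ (the outer-part identities reduce to $S\cup((X\cap T)\cap(Y\cap T))$ and $S\cup((X\cup Y)\cap T)$), so both bisets lie in ${\cal F}'$ and ${\cal F}'$ is crossing.

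Having established the reduction, I would apply Theorem~\ref{t:FL}(i) to the instance $(G,c,{\cal F}')$ on groundset $V=S\cup T$. This yields, in polynomial time, an edge set $J\subseteq E$ covering ${\cal F}'$ of cost at most $O(\log|S\cup T|)\cdot\tau({\cal F}')=O(\log|S\cup T|)\cdot\tau({\cal F})$. By the edge-cover equivalence above, $J$ also covers ${\cal F}$, which completes the proof. The implementation requirement on cores of ${\cal F}'(u,v)$ assumed in the paper carries over routinely from the analogous oracle for ${\cal F}$, since the biset operations and the covering relation are preserved by the bijection $X\mapsto\hat{X}'$.

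The only substantive step is the verification that the $(S,T)$-crossing hypothesis on ${\cal F}$ translates into the biset-crossing hypothesis on ${\cal F}'$ and that the biset intersection/union correspond to the set intersection/union under the map $X\mapsto\hat{X}'$; this is purely mechanical set-algebra. The rest is a direct appeal to Theorem~\ref{t:FL}(i), with no loss in the approximation ratio or in the LP-relaxation bound.
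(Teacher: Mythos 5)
Your proof is correct and takes exactly the paper's route: the paragraph preceding the corollary sketches precisely this reduction $X \mapsto \hat{X}'=(X\cap S,\,S\cup(X\cap T))$ and the appeal to Theorem~\ref{t:FL}(i), leaving the verification that ${\cal F}'$ is crossing and that the covering relation is preserved ``not hard to verify.'' You have filled in those checks correctly, including the set-algebra identities $\widehat{X\cap Y}'=\hat{X}'\cap\hat{Y}'$ and $\widehat{X\cup Y}'=\hat{X}'\cup\hat{Y}'$ and the equivalence of the crossing conditions.
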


\section{Proof of Theorem~\ref{t:FL}}

\subsection{Proof of part~(i)}

Recall that we assume that for any edge set $J$ on $V$ and any $u,v \in V$ we are able to compute 
in polynomial time the cores of the biset-family 
${\cal F}(u,v)=\{\hat{S} \in {\cal F}^J: u \in S, v \in V \setminus S^+\}$ 
and also the cores of its co-family, or to determine that ${\cal F}(u,v)$ is empty.
Note that if ${\cal F}$ is crossing, then ${\cal F}(u,v)$ has a unique core, and the 
co-family of ${\cal F}(u,v)$ also has a unique core.

 \begin{lemma} \label{l:poly}
A crossing biset-family ${\cal F}$ has at most $n(n-1)$ cores and they can be computed in polynomial time.
\end{lemma}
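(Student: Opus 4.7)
The plan is to exploit the assumption, stated just before the lemma, that for every ordered pair $(u,v)$ with $u \neq v$ we can compute in polynomial time the unique core of $\mathcal{F}(u,v)=\{\hat{S}\in\mathcal{F}: u\in S,\; v\in V\setminus S^+\}$ (taking $J=\emptyset$), provided we can first verify that this family indeed has a unique core. I would therefore structure the proof in three short steps: (1) uniqueness of a core for each $\mathcal{F}(u,v)$; (2) every $\mathcal{F}$-core arises as the core of some $\mathcal{F}(u,v)$; (3) counting and algorithmic conclusion.

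For step~(1), let $\hat{X},\hat{Y}$ be two inclusion-minimal members of $\mathcal{F}(u,v)$. Since $u\in X\cap Y$ we have $X\cap Y\neq\emptyset$, and since $v\notin X^+\cup Y^+$ we have $X^+\cup Y^+\neq V$; thus $\hat{X}$ and $\hat{Y}$ cross. Crossingness of $\mathcal{F}$ gives $\hat{X}\cap\hat{Y}\in\mathcal{F}$, and clearly $u\in X\cap Y$ while $v\in V\setminus(X^+\cap Y^+)$, so $\hat{X}\cap\hat{Y}\in\mathcal{F}(u,v)$. Since $\hat{X}\cap\hat{Y}\subseteq\hat{X},\hat{Y}$, minimality forces $\hat{X}=\hat{X}\cap\hat{Y}=\hat{Y}$.

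For step~(2), let $\hat{C}$ be any $\mathcal{F}$-core. Because $\hat{C}$ is proper, we can pick $u\in C$ and $v\in V\setminus C^+$; then $\hat{C}\in\mathcal{F}(u,v)$. The unique core $\hat{C}'$ of $\mathcal{F}(u,v)$ satisfies $\hat{C}'\subseteq\hat{C}$ and $\hat{C}'\in\mathcal{F}$, so by minimality of $\hat{C}$ in $\mathcal{F}$ we have $\hat{C}'=\hat{C}$. Thus the map sending each pair $(u,v)$ to the (at most one) core of $\mathcal{F}(u,v)$ is surjective onto $\mathcal{C}(\mathcal{F})$. Since there are at most $n(n-1)$ ordered pairs of distinct elements of $V$, we get $\nu(\mathcal{F})\leq n(n-1)$.

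For step~(3), the algorithm simply enumerates all ordered pairs $(u,v)$ with $u\neq v$ and invokes the assumed polynomial-time oracle to compute the core of $\mathcal{F}(u,v)$ (or report emptiness); the collection of distinct bisets returned is exactly $\mathcal{C}(\mathcal{F})$. The only conceptual point, which step~(1) addresses, is that crossingness legitimately forces a single core per pair; no additional obstacle arises.
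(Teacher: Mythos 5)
Your overall strategy matches the paper's: enumerate ordered pairs $(u,v)$, compute the unique core $\hat{C}_{uv}$ of ${\cal F}(u,v)$, and conclude. Steps~(1) and~(2) are correct, and step~(1) is in fact a welcome addition --- the paper merely asserts, without proof, that ${\cal F}(u,v)$ has a unique core, whereas you give the short crossing argument that justifies it. Step~(2) correctly establishes that the map $(u,v)\mapsto\hat{C}_{uv}$ is surjective onto ${\cal C}({\cal F})$, which already yields the bound $\nu({\cal F})\le n(n-1)$.

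However, step~(3) contains a genuine gap. You claim that the collection $\{\hat{C}_{uv}\}$ is \emph{exactly} ${\cal C}({\cal F})$, but surjectivity only gives $\{\hat{C}_{uv}\}\supseteq{\cal C}({\cal F})$; the containment can be strict, because $\hat{C}_{uv}$ being inclusion-minimal in ${\cal F}(u,v)$ does not make it inclusion-minimal in ${\cal F}$. A small example: take $V=\{1,2,3,4\}$ and ${\cal F}=\{(\{1\},\{1\}),\ (\{1,2\},\{1,2\})\}$, which is easily checked to be crossing with the single core $(\{1\},\{1\})$; yet ${\cal F}(2,3)=\{(\{1,2\},\{1,2\})\}$, so $\hat{C}_{23}=(\{1,2\},\{1,2\})$ is not an ${\cal F}$-core. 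The paper is careful about exactly this point: its proof outputs the inclusion-minimal members of $\{\hat{C}_{uv}: u,v\in V\}$, not the whole family. That filtering step is correct because (a) every ${\cal F}$-core appears as some $\hat{C}_{uv}$ by your step~(2) and is inclusion-minimal there since it is inclusion-minimal in the larger family ${\cal F}$, and (b) conversely, any $\hat{C}_{uv}$ that is \emph{not} an ${\cal F}$-core strictly contains some $\hat{S}\in{\cal F}$, which in turn contains some ${\cal F}$-core $\hat{C}'=\hat{C}_{u'v'}$, so $\hat{C}_{u'v'}\subsetneq\hat{C}_{uv}$ and $\hat{C}_{uv}$ is not inclusion-minimal in $\{\hat{C}_{uv}\}$. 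Adding this one filtering step fixes your proof; the rest is sound.
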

\begin{proof}
For every ordered pair of nodes $(u,v) \in V \times V$ we compute the core $\hat{C}_{uv}$ 
of the biset-family ${\cal F}(u,v)=\{\hat{S} \in {\cal F}^J: u \in S, v \in V \setminus S^+\}$, 
if ${\cal F}(u,v)$ is non-empty.
Then ${\cal C}({\cal F})$ consists from the inclusion-minimal members (cores) 
of the biset-family $\{\hat{C}_{uv}: u,v \in V\}$. 
\end{proof}

\begin{definition}
Given a biset-family ${\cal F}$ and a core $\hat{C} \in {\cal C}({\cal F})$ of ${\cal F}$ 
let ${\cal F}(\hat{C})$ denote the family 
of the bisets in ${\cal F}$ that contain $\hat{C}$ and contain no other core of ${\cal F}$. 
\end{definition}

The following statement can be easily verified. 

\begin{claim} \label{c:iterative-merging}
Let ${\cal F}$ be a biset-family and $J$ a set of directed edges on $V$.
If for some $\hat{C} \in {\cal C}({\cal F})$,
$J$ covers ${\cal F}(\hat{C})$ and covers no core distinct from $\hat{C}$, 
then ${\cal C}({\cal F}^J)={\cal C}({\cal F}) \setminus \{\hat{C}\}$ and 
$\nu({\cal F}^J) = \nu({\cal F})-1$.
Furthermore, if ${\cal F}$ is intersection-closed and $J$ covers ${\cal F}(\hat{C})$ 
for every ${\cal F}$-core $\hat{C}$, 
then every ${\cal F}^J$-core contains at least two ${\cal F}$-cores, 
and thus $\nu({\cal F}^J) \leq \nu({\cal F})/2$.
\end{claim}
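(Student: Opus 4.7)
Both parts hinge on a single observation: any biset in ${\cal F}$ contains at least one ${\cal F}$-core (by finiteness and the minimality definition), and the residual family ${\cal F}^J$ inherits whatever intersection/crossing closure ${\cal F}$ enjoys, a fact stated in the introduction.

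\textbf{Part one.} I would prove the set equality ${\cal C}({\cal F}^J)={\cal C}({\cal F})\setminus\{\hat C\}$ by two inclusions. For ``$\supseteq$'', each core $\hat C'\neq\hat C$ of ${\cal F}$ is uncovered by $J$ (by hypothesis $J$ covers no core distinct from $\hat C$), so $\hat C'\in{\cal F}^J$; already inclusion-minimal in the larger family ${\cal F}$, it is a fortiori minimal in ${\cal F}^J\subseteq{\cal F}$. That $\hat C$ itself drops out is immediate: $\hat C\in{\cal F}(\hat C)$, which is covered by $J$. For ``$\subseteq$'', let $\hat S$ be any ${\cal F}^J$-core. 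Since $\hat S\in{\cal F}$, it contains some ${\cal F}$-core. If $\hat C$ were the only one contained in $\hat S$, then $\hat S\in{\cal F}(\hat C)$ and $J$ would cover $\hat S$, contradicting $\hat S\in{\cal F}^J$. Hence $\hat S$ contains some $\hat C'\neq\hat C$; since $\hat C'\in{\cal F}^J$ and $\hat S$ is inclusion-minimal in ${\cal F}^J$, we conclude $\hat S=\hat C'$. The count $\nu({\cal F}^J)=\nu({\cal F})-1$ then follows.

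\textbf{Part two.} The same argument applied under the new hypothesis shows that every ${\cal F}^J$-core contains at least two ${\cal F}$-cores: a core containing only $\hat C$ would lie in ${\cal F}(\hat C)$ and hence be covered. To upgrade this to $\nu({\cal F}^J)\leq\nu({\cal F})/2$, I would show that distinct ${\cal F}^J$-cores contain disjoint sets of ${\cal F}$-cores. Suppose $\hat S_1\neq\hat S_2$ are ${\cal F}^J$-cores that share an ${\cal F}$-core $\hat C$. Then $S_1\cap S_2\supseteq C\neq\emptyset$, so $\hat S_1,\hat S_2$ intersect; the intersection $\hat S_1\cap\hat S_2$ is proper since its outer-part complement contains the nonempty $V\setminus S_1^+$, and by intersection-closedness of ${\cal F}^J$ it lies in ${\cal F}^J$. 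As $\hat S_1,\hat S_2$ are distinct, the intersection is strictly contained in at least one of them, contradicting that one's minimality in ${\cal F}^J$.

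\textbf{Expected obstacle.} The only delicate point is verifying, in the paper's specific biset partial order, that $\hat S_1\cap\hat S_2$ is strictly contained in $\hat S_1$ or $\hat S_2$ whenever they differ; this requires a short case split according to whether the difference is witnessed by the inner or the outer parts, but each case follows directly from the definition of $\subsetneq$ on bisets. Everything else is a straightforward chase through the definitions of ``core'', of ${\cal F}(\hat C)$, and of the residual family ${\cal F}^J$.
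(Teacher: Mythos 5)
Your proof is correct and fills in carefully the details that the paper dismisses with ``can be easily verified.'' Part one is a clean two-inclusion argument; part two correctly identifies that the counting bound needs the ${\cal F}^J$-cores to absorb \emph{disjoint} sets of ${\cal F}$-cores, and you supply a valid argument for this via the intersection-closure of ${\cal F}^J$. Two small remarks. First, your appeal to ``a fact stated in the introduction'' for the preservation of intersection-closedness under taking residual families overclaims slightly: the introduction asserts preservation only for \emph{crossing} and \emph{intersecting} families. The fact you need does hold, but it deserves its own one-line justification --- if $\hat X,\hat Y\in{\cal F}^J$ intersect, then $\hat X\cap\hat Y\in{\cal F}$, and any edge covering $\hat X\cap\hat Y$ must cover $\hat X$ or $\hat Y$ (its head lies in $V\setminus(X^+\cap Y^+)=(V\setminus X^+)\cup(V\setminus Y^+)$ and its tail in $X\cap Y$), so no edge of $J$ covers $\hat X\cap\hat Y$. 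This is precisely the paper's later Fact~\ref{f:XY}(i). Second, the paper's Lemma~\ref{l:HC} (cores of an intersection-closed family are pairwise disjoint) appears just after this claim in the text and, applied to ${\cal F}^J$, would give your disjointness conclusion a bit more directly than the ``strict containment of the intersection'' case split you undertake; both routes are sound, and yours has the advantage of not relying on a lemma stated later. No gaps.
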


\begin{lemma} \label{l:HH}
Let $\hat{C}_1,\hat{C}_2 \in {\cal C}({\cal F})$ be distinct cores of a crossing biset-family ${\cal F}$ 
and let $\hat{S}_1 \in {\cal F}(\hat{C}_1)$ and  $\hat{S}_2 \in {\cal F}(\hat{C}_2)$. 
Then $\hat{S}_1,\hat{S}_2$ do not cross. Consequently, no edge covers both $\hat{S}_1,\hat{S}_2$.
\end{lemma}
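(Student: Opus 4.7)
The plan is to argue by contradiction on the first assertion, then deduce the ``consequently'' clause from the definition of covering.

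First I would suppose, for contradiction, that $\hat{S}_1$ and $\hat{S}_2$ cross. Since ${\cal F}$ is crossing, this forces $\hat{S}_1 \cap \hat{S}_2 \in {\cal F}$ (we do not actually need $\hat{S}_1 \cup \hat{S}_2$ here). The key observation is then that any nonempty subfamily of ${\cal F}$ contains an inclusion-minimal element, so because $V$ is finite I can pick a core $\hat{C} \in {\cal C}({\cal F})$ with $\hat{C} \subseteq \hat{S}_1 \cap \hat{S}_2$: simply take any $\hat{C}\in {\cal F}$ that is $\subseteq$-minimal among members of ${\cal F}$ contained in $\hat{S}_1 \cap \hat{S}_2$, and observe that if some $\hat{C}' \in {\cal F} \setminus \{\hat{C}\}$ were contained in $\hat{C}$, then $\hat{C}' \subset \hat{S}_1 \cap \hat{S}_2$ would contradict minimality, so $\hat{C}$ is indeed an ${\cal F}$-core.

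Next I would apply the defining property of ${\cal F}(\hat{C}_i)$: since $\hat{S}_1 \in {\cal F}(\hat{C}_1)$, the only ${\cal F}$-core contained in $\hat{S}_1$ is $\hat{C}_1$, and similarly $\hat{C}_2$ is the only ${\cal F}$-core inside $\hat{S}_2$. But $\hat{C} \subseteq \hat{S}_1 \cap \hat{S}_2$ sits inside both, which forces $\hat{C}=\hat{C}_1=\hat{C}_2$, contradicting the hypothesis that the two cores are distinct. The only mildly delicate point is ensuring the existence of the core $\hat{C}$ inside the intersection — this is a standard minimality argument but is the one place that uses the finiteness of the ground set, and I would state it explicitly to avoid confusion with situations where ${\cal F}$ is presented implicitly.

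Finally, for the consequence, suppose a single directed edge $e=(u,v)$ covers both $\hat{S}_1$ and $\hat{S}_2$. Then $u \in S_1 \cap S_2$, so $S_1 \cap S_2 \neq \emptyset$, and $v \in (V\setminus S_1^+) \cap (V\setminus S_2^+) = V \setminus (S_1^+ \cup S_2^+)$, so $S_1^+ \cup S_2^+ \neq V$. By Definition~\ref{d:1} this says $\hat{S}_1,\hat{S}_2$ cross, contradicting the first part. I do not expect any real obstacle: the whole argument rests on the closure of ${\cal F}$ under crossing intersections together with the characterization of ${\cal F}(\hat{C})$ as the bisets ``belonging'' uniquely to the core $\hat{C}$.
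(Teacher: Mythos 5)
Your proposal is correct and follows essentially the same route as the paper: assume $\hat{S}_1,\hat{S}_2$ cross, place an ${\cal F}$-core $\hat{C}$ inside $\hat{S}_1 \cap \hat{S}_2 \in {\cal F}$, and then the defining property of ${\cal F}(\hat{C}_1)$ and ${\cal F}(\hat{C}_2)$ forces $\hat{C}=\hat{C}_1=\hat{C}_2$, a contradiction; the ``consequently'' clause is derived exactly as the paper does, from the observation that an edge covering both bisets has its tail in $S_1 \cap S_2$ and its head outside $S_1^+ \cup S_2^+$, witnessing the cross. Your extra care in justifying the existence of a core inside $\hat{S}_1 \cap \hat{S}_2$ via a finiteness/minimality argument is a harmless amplification of a step the paper takes for granted.
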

\begin{proof}
Suppose to the contrary that $\hat{S}_1$ and $\hat{S}_2$ cross.
Then $\hat{S}_1 \cap \hat{S}_2 \in {\cal F}$.
Thus $\hat{S}_1 \cap \hat{S}_2$ contains some ${\cal F}$-core $\hat{C}$.
We cannot have $\hat{C} \neq \hat{C}_1$ as $\hat{C} \subseteq \hat{C}_1$ and $\hat{C}_1$ is a core, and
we cannot have $\hat{C}=\hat{C}_1$      as $\hat{C} \subseteq \hat{C}_2$, $\hat{C}_1 \neq \hat{C}_2$,
and $\hat{C}_2$ is a core. This gives a contradiction. 
The second statement follows from the observation
an edge covers two bisets $\hat{X},\hat{Y}$ simultaneously if, and only if, it goes from 
$X \cap Y$ to $V \setminus (X^+ \cup Y^+)$, and hence $\hat{X},\hat{Y}$ cross.
\end{proof}

\begin{lemma} \label{l:H}
Let $\hat{C} \in {\cal C}({\cal F})$ be a core of a crossing biset-family ${\cal F}$, 
and let $\hat{X}, \hat{Y} \in {\cal F}(\hat{C})$. 
If $\hat{X}, \hat{Y}$ cross then $\hat{X} \cap \hat{Y}, \hat{X} \cup \hat{Y} \in {\cal F}(\hat{C})$.
\end{lemma}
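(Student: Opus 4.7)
The plan is to verify the two defining conditions of membership in ${\cal F}(\hat{C})$ for both $\hat{X} \cap \hat{Y}$ and $\hat{X} \cup \hat{Y}$: (a) they lie in ${\cal F}$ and contain $\hat{C}$; and (b) they contain no core of ${\cal F}$ other than $\hat{C}$. Membership in ${\cal F}$ is immediate from the crossing property of ${\cal F}$ together with the hypothesis that $\hat{X},\hat{Y}$ cross. Containment of $\hat{C}$ is also immediate: from $\hat{C}\subseteq\hat{X}$ and $\hat{C}\subseteq\hat{Y}$ one reads off $C \subseteq X\cap Y$, $C^+\subseteq X^+\cap Y^+$, $C\subseteq X\cup Y$, and $C^+\subseteq X^+\cup Y^+$.

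The intersection case of (b) is routine: any ${\cal F}$-core $\hat{C}'\subseteq \hat{X}\cap\hat{Y}$ is also contained in $\hat{X}$, so $\hat{X}\in{\cal F}(\hat{C})$ forces $\hat{C}'=\hat{C}$.

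The main obstacle, and the only interesting step, is ruling out that $\hat{X}\cup\hat{Y}$ contains a second ${\cal F}$-core $\hat{C}'\neq\hat{C}$. I plan to combine two pieces of information. First, $\hat{C}'$ is a core, hence $\hat{C}'\in{\cal F}(\hat{C}')$, so Lemma~\ref{l:HH} applied to the distinct cores $\hat{C},\hat{C}'$ and the bisets $\hat{X}\in{\cal F}(\hat{C})$, $\hat{C}'\in{\cal F}(\hat{C}')$ tells us that $\hat{X}$ and $\hat{C}'$ do not cross; the same holds for $\hat{Y}$ and $\hat{C}'$. Second, $\hat{C}'\subseteq\hat{X}\cup\hat{Y}$ gives $C'\subseteq X\cup Y$ and $(C')^+\subseteq X^+\cup Y^+$.

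To finish I will do a short case analysis on the non-crossing alternatives. If both $C'\cap X=\emptyset$ and $C'\cap Y=\emptyset$, then $C'\subseteq X\cup Y$ forces $C'=\emptyset$, contradicting that $\hat{C}'$ is proper. Otherwise, after swapping $X,Y$ if necessary, $C'\cap X\neq\emptyset$ and non-crossing gives $(C')^+\cup X^+=V$; combined with $(C')^+\subseteq X^+\cup Y^+$ this yields $V=(C')^+\cup X^+\subseteq X^+\cup Y^+$, contradicting the hypothesis $X^+\cup Y^+\neq V$ that $\hat{X},\hat{Y}$ cross. This contradicts the existence of $\hat{C}'$, and an entirely symmetric (in fact identical) argument handles $\hat{X}\cap\hat{Y}$, completing the proof. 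The key insight guiding the plan is that the ``crossing'' condition $X^+\cup Y^+\neq V$ is exactly what prevents an alien core from sneaking into the union.
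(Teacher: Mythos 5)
Your proof is correct, and it reaches the key contradiction by a route that is organized differently from the paper's, though both rest on the same combinatorial facts. Parts (a) and the intersection half of (b) match the paper's reasoning. For the union half — ruling out a second core $\hat{C}'\neq\hat{C}$ inside $\hat{X}\cup\hat{Y}$ — the paper argues directly and self-containedly: from $\hat{C}'\subseteq\hat{X}\cup\hat{Y}$, $C'\neq\emptyset$, and $(C')^+\cup X^+\subseteq X^+\cup Y^+\neq V$ it concludes that $\hat{C}'$ \emph{crosses} $\hat{X}$ or $\hat{Y}$, so $\hat{C}'\cap\hat{X}$ (say) is a member of ${\cal F}$ strictly below $\hat{C}'$, contradicting that $\hat{C}'$ is a core. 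You instead make the nice observation that every core $\hat{C}'$ belongs to its own family ${\cal F}(\hat{C}')$, so Lemma~\ref{l:HH} applies and yields that $\hat{C}'$ crosses \emph{neither} $\hat{X}$ nor $\hat{Y}$; combining the resulting equality $(C')^+\cup X^+=V$ with $(C')^+\subseteq X^+\cup Y^+$ then contradicts $X^+\cup Y^+\neq V$. In effect the two arguments are contrapositives of each other: the paper derives the cross and contradicts minimality of $\hat{C}'$, you derive the non-cross (via Lemma~\ref{l:HH}, which already encodes the minimality argument) and contradict the crossing hypothesis on $\hat{X},\hat{Y}$. Your version buys modularity — reusing Lemma~\ref{l:HH} rather than re-proving its content — at the cost of a short explicit case split on whether $C'$ meets $X$ or $Y$. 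One stylistic slip: your final sentence claiming a symmetric argument also handles $\hat{X}\cap\hat{Y}$ is redundant, since you already disposed of the intersection case earlier by the simple containment argument.
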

\begin{proof}
Since ${\cal F}$ is crossing, $\hat{X} \cap \hat{Y}, \hat{X} \cup \hat{Y} \in {\cal F}$.
Since $\hat{X} \cap \hat{Y} \subseteq \hat{X} \subseteq \hat{X} \cup \hat{Y}$
and since $\hat{X} \in {\cal F}(\hat{C})$, it follows that $\hat{X} \cap \hat{Y} \in {\cal F}(\hat{C})$
and that $\hat{C} \subseteq \hat{X} \cup \hat{Y}$.
It remains to prove that $\hat{X} \cup \hat{Y}$ contains no core distinct from $\hat{C}$.
Suppose to the contrary that $\hat{X} \cup \hat{Y}$ contains a core $\hat{S}$ distinct from $\hat{C}$.
Since $\hat{X}, \hat{Y} \in {\cal F}(\hat{C})$, none of $\hat{X},\hat{Y}$ contains $\hat{S}$.
This implies that $\hat{S},\hat{X}$ cross or $\hat{S},\hat{Y}$ cross,
so $\hat{S} \cap \hat{X} \in {\cal F}$ or $\hat{S} \cap \hat{Y} \in {\cal F}$.
This contradicts that $\hat{S}$ is a core.
\end{proof}

\begin{lemma} \label{l:intersecting}
Let ${\cal F}$ be a crossing biset-family on $V$ and let $\hat{C} \in {\cal C}({\cal F})$.
Then the co-family ${\cal R}(\hat{C})=\{(V \setminus S^+, V \setminus S): \hat{S} \in {\cal F}(\hat{C})\}$
of ${\cal F}(\hat{C})$ is intersecting, and its cores can be found in polynomial time.
\end{lemma}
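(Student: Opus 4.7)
The plan is to split the lemma into its two claims: the intersecting-ness will follow immediately from Lemma~\ref{l:H} combined with a de Morgan identity on bisets, and the polynomial-time core computation will rely on the assumed oracle together with the precomputed list of ${\cal F}$-cores (Lemma~\ref{l:poly}), exploiting the structural fact that cores of an intersecting family have pairwise disjoint inner parts.

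For the intersecting claim, I would take two members $\hat{A}=(V\setminus X^+,V\setminus X)$ and $\hat{B}=(V\setminus Y^+,V\setminus Y)$ of ${\cal R}(\hat{C})$, arising from $\hat{X},\hat{Y}\in{\cal F}(\hat{C})$, whose inner parts intersect. Since $A\cap B=(V\setminus X^+)\cap(V\setminus Y^+)=V\setminus(X^+\cup Y^+)$, this is equivalent to $X^+\cup Y^+\neq V$. Because both $\hat{X}$ and $\hat{Y}$ contain the proper core $\hat{C}$, their inner parts contain the nonempty set $C$, so $X\cap Y\supseteq C\neq\emptyset$; hence $\hat{X}$ and $\hat{Y}$ cross, and Lemma~\ref{l:H} places both $\hat{X}\cap\hat{Y}$ and $\hat{X}\cup\hat{Y}$ in ${\cal F}(\hat{C})$. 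A direct de Morgan computation then identifies $\hat{A}\cup\hat{B}$ as the co-biset of $\hat{X}\cap\hat{Y}$ and $\hat{A}\cap\hat{B}$ as the co-biset of $\hat{X}\cup\hat{Y}$, so both lie in ${\cal R}(\hat{C})$.

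For the polynomial-time core computation, I would begin with the observation that in any intersecting family, two distinct cores must have disjoint inner parts (otherwise their intersection would be a strictly smaller member of the family, contradicting minimality), so $|{\cal C}({\cal R}(\hat{C}))|\le n$. For each $v\in V$, the candidate core whose inner part contains $v$ corresponds to the unique outer-maximal biset $\hat{X}^*_v\in{\cal F}(\hat{C})$ with $v\notin X^+$ (well-defined by Lemma~\ref{l:H}). Fixing any $u\in C$, I would invoke the assumed oracle on ${\cal F}^J(u,v)$ to obtain its outer-maximal element $\hat{X}_{uv}$, and then use the precomputed list of ${\cal F}$-cores (Lemma~\ref{l:poly}) to test whether $\hat{X}_{uv}\in{\cal F}(\hat{C})$, i.e., contains no ${\cal F}$-core other than $\hat{C}$; whenever this test succeeds, $\hat{X}_{uv}=\hat{X}^*_v$ and its co-biset is a core of ${\cal R}(\hat{C})$.

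The main obstacle is the case in which $\hat{X}_{uv}$ outer-contains some other ${\cal F}$-core $\hat{D}\neq\hat{C}$, so that $\hat{X}_{uv}\notin{\cal F}(\hat{C})$ even though a smaller $\hat{X}^*_v\in{\cal F}(\hat{C})$ may still exist and be missed by the naive filter. I expect to handle this by widening the oracle search: for each interfering ${\cal F}$-core $\hat{D}$ and each representative $w\in D$, calling the oracle on $(u,w)$ forces $w\notin X^+$ and hence $D\not\subseteq X$ on the returned outer-maximal, excluding $\hat{D}$; iterating over the $O(n^2)$ such refined queries (or, equivalently, augmenting $J$ by virtual edges that cover the unwanted cores before re-invoking the oracle) produces polynomially many candidate outer-maximal bisets. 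Retaining only those surviving the ${\cal F}(\hat{C})$-membership check and taking inclusion-minimal co-bisets then yields ${\cal C}({\cal R}(\hat{C}))$ in polynomial time, with the disjoint-inner-parts bound guaranteeing the output has at most $n$ members.
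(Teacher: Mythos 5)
Your proof of the intersecting claim is correct and essentially the paper's: you take two intersecting members of ${\cal R}(\hat{C})$, observe that their underlying bisets in ${\cal F}(\hat{C})$ cross (the paper leaves implicit the reason $X\cap Y\neq\emptyset$, namely $C\subseteq X\cap Y$, which you spell out), apply Lemma~\ref{l:H}, and translate back via de~Morgan. No issue there.

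The polynomial-time half has a genuine gap. Your primary fix for the ``interfering core'' obstacle --- querying the oracle on pairs $(u,w)$ with $w\in D$ for each other core $\hat{D}$ --- does not work, because the biset-family ${\cal F}(u,w)$ enforces $w\notin S^+$ but drops the constraint $v\notin S^+$; these queries therefore return bisets from an unrelated sub-family and are \emph{not} equivalent to restricting ${\cal F}(u,v)$ to bisets avoiding $\hat{D}$. The idea you relegate to a parenthetical --- augmenting $J$ by ``virtual'' edges that cover the unwanted cores --- is the right one and is exactly what the paper does: it sets $K=\bigcup_{\hat{S}\in{\cal C}({\cal F})\setminus\{\hat{C}\}}\{uv: u\in S, v\in V\setminus S^+\}$ and queries ${\cal F}^K(u,v)$. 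But to make this rigorous you must prove the identity ${\cal F}^K={\cal F}(\hat{C})$, and the nontrivial direction ${\cal F}(\hat{C})\subseteq{\cal F}^K$ is precisely where Lemma~\ref{l:HH} enters: no edge covering a biset in ${\cal F}(\hat{S})$ for a core $\hat{S}\neq\hat{C}$ can cover a biset in ${\cal F}(\hat{C})$, so the edges of $K$ (each of which covers some $\hat{S}\in{\cal C}({\cal F})\setminus\{\hat{C}\}\subseteq{\cal F}(\hat{S})$) leave all of ${\cal F}(\hat{C})$ intact. Without that step, the oracle on ${\cal F}^K$ could in principle miss members of ${\cal F}(\hat{C})$ and the filtering scheme would be unsound. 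So: drop the $(u,w)$-query route, promote the virtual-edge idea to the main argument, and close it by invoking Lemma~\ref{l:HH} to establish ${\cal F}^K={\cal F}(\hat{C})$ before applying the co-core oracle.
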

\begin{proof}
Let $\hat{X}_0,\hat{Y}_0 \in {\cal R}(\hat{C})$ be the co-bisets of $\hat{X},\hat{Y} \in {\cal F}(\hat{C})$,
respectively. Suppose that $\hat{X}_0,\hat{Y}_0$ intersect.
Then $\hat{X},\hat{Y}$ cross, hence $\hat{X} \cap \hat{Y}, \hat{X} \cup \hat{Y} \in {\cal F}(\hat{C})$,
by Lemma~\ref{l:H}. The co-bisets of $\hat{X} \cap \hat{Y}$ and $\hat{X} \cup \hat{Y}$ are
$\hat{X}_0 \cup \hat{Y}_0$ and $\hat{X}_0 \cap \hat{Y}_0$,
hence $\hat{X}_0 \cup \hat{Y}_0,\hat{X}_0 \cap \hat{Y}_0 \in {\cal R}(\hat{C})$.
This implies that ${\cal R}(\hat{C})$ is an intersecting biset family. 
Now we show how to find the cores of ${\cal R}(\hat{C})$ in polynomial time.
For an ${\cal F}$-core $\hat{S} \neq \hat{C}$ let 
$K_{\hat{S}}=\{uv: u \in S,v \in V \setminus S^+\}$ be the set of all edges 
from $S$ to $V \setminus S^+$. 
Let $K=\bigcup_{\hat{S} \in {\cal C}({\cal F}) \setminus \{\hat{C}\}} K_{\hat{S}}$. 
We claim that ${\cal F}^K={\cal F}(\hat{C})$.
To see this, note that for any $\hat{S} \in {\cal C}({\cal F}) \setminus \{\hat{C}\}$: \\ 
(i) \ $K_{\hat{S}}$ covers all bisets in ${\cal F}$ that contain $\hat{X}$; 
(ii)  $K_{\hat{S}}$ does not cover any biset in ${\cal F}(\hat{C})$, by Lemma~\ref{l:HH}. 
Now choose $u \in C$, and for every $v \in V$ compute the core $\hat{C}_v$ 
of the co-family of ${\cal F}^K(u,v)$. 
The ${\cal R}(\hat{C})$-cores are the inclusion-minimal members of the family $\{\hat{C}_v: v \in V\}$. 
\end{proof}

\begin{corollary} \label{c:S-greedy}
{\sf Biset-family Edge-Cover} with crossing ${\cal F}$
admits a polynomial time algorithm that given a core $\hat{C} \in {\cal C}({\cal F})$ 
computes an ${\cal F}(\hat{C})$-cover $J_{\hat{C}} \subseteq E$ of cost 
$c\left(J_{\hat{C}}\right)=\tau({\cal F}(\hat{C}))$.
Moreover, $\sum_{\hat{C} \in {\cal C}} \tau({\cal F}(\hat{C})) \leq \tau({\cal F})$, and thus
there exists $\hat{C} \in {\cal C}({\cal F})$ such that $c(J_{\hat{C}}) \leq \tau({\cal F})/\nu({\cal F})$.
\end{corollary}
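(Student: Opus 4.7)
The plan is to combine Lemma~\ref{l:intersecting} with the standard polynomial-time primal-dual algorithm for intersecting biset-families \cite{F-R} for the first claim, and then to use Lemma~\ref{l:HH} for the summation inequality.

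For the first claim, I would reduce computing a minimum-cost ${\cal F}(\hat{C})$-cover to covering an intersecting biset-family. Lemma~\ref{l:intersecting} tells us that the co-family ${\cal R}(\hat{C})$ of ${\cal F}(\hat{C})$ is intersecting and that its cores are computable in polynomial time. Since a directed edge covers a biset if and only if its reverse covers the co-biset, $J \subseteq E$ covers ${\cal F}(\hat{C})$ iff the reverse of $J$ covers ${\cal R}(\hat{C})$ in the reverse graph. Thus applying the primal-dual algorithm of \cite{F-R} to ${\cal R}(\hat{C})$ in the reverse graph yields in polynomial time an integral optimum whose cost equals the LP value; reversing the returned edges produces $J_{\hat{C}}$, and the equality $c(J_{\hat{C}}) = \tau({\cal F}(\hat{C}))$ follows from the fact that the LPs of ${\cal F}(\hat{C})$ and ${\cal R}(\hat{C})$ have the same optimum value.

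The summation inequality is where Lemma~\ref{l:HH} does the work. That lemma's ``Consequently'' clause says that, for distinct cores $\hat{C}_1 \neq \hat{C}_2$, no single edge covers both a biset in ${\cal F}(\hat{C}_1)$ and a biset in ${\cal F}(\hat{C}_2)$. Hence each edge $e$ has at most one ``owning'' core $\phi(e) \in {\cal C}({\cal F})$ such that $e$ covers some member of ${\cal F}(\phi(e))$. Given an optimal fractional solution $x^*$ to the LP for ${\cal F}$, I would define $x^{\hat{C}}$ by restricting $x^*$ to $\phi^{-1}(\hat{C})$ and setting the remaining coordinates to $0$. For any $\hat{S} \in {\cal F}(\hat{C})$, every edge in $\delta_E(\hat{S})$ has $\phi(e) = \hat{C}$, so $x^{\hat{C}}$ is feasible for the LP of ${\cal F}(\hat{C})$, giving $\tau({\cal F}(\hat{C})) \leq c^T x^{\hat{C}}$. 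The supports of the vectors $x^{\hat{C}}$ are pairwise disjoint, so summing over cores yields $\sum_{\hat{C} \in {\cal C}({\cal F})} \tau({\cal F}(\hat{C})) \leq c^T x^* = \tau({\cal F})$. Averaging over the $\nu({\cal F})$ cores gives some $\hat{C}$ with $c(J_{\hat{C}}) = \tau({\cal F}(\hat{C})) \leq \tau({\cal F})/\nu({\cal F})$, and the algorithm simply iterates over the polynomially many cores provided by Lemma~\ref{l:poly} and returns the cheapest $J_{\hat{C}}$.

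The main point to be careful about is the exact (integrality-gap-free) solvability of the LP for ${\cal F}(\hat{C})$, which is entirely the content of reducing to an intersecting biset-family via Lemma~\ref{l:intersecting} and invoking the Frank--Rado framework \cite{F-R}. The rest of the argument is an edge-partition via $\phi$, and the only thing to verify there is that $\phi$ is well-defined as a partial function on $E$, which is precisely what Lemma~\ref{l:HH} guarantees.
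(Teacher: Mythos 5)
Your proposal is correct and follows essentially the same route as the paper: reduce to the intersecting co-family ${\cal R}(\hat{C})$ via Lemma~\ref{l:intersecting}, apply the primal-dual algorithm in the reversed graph to get an LP-exact cover, and derive the summation bound from the disjointness consequence of Lemma~\ref{l:HH}. The paper states the second part in one line (``follows from Lemma~\ref{l:HH}''), and your edge-ownership map $\phi$ and disjoint-support argument is exactly the intended fill-in; the only cosmetic slip is calling \cite{F-R} the ``Frank--Rado framework'' (it is a paper of Frank alone).
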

\begin{proof}
By Lemma~\ref{l:intersecting}, the co-family 
${\cal R}(\hat{C})$ of ${\cal F}(\hat{C})$ is intersecting.
Thus, after reversing the edges in $E$, we can apply a standard primal-dual algorithm to compute an 
edge-cover of ${\cal R}(\hat{C})$ of cost $\tau({\cal R}(\hat{C}))=\tau({\cal F}(\hat{C}))$; 
$J_{\hat{C}}$ is the reverse edge set of this cover.
This primal-dual algorithm can be implemented in polynomial time if the cores of ${\cal R}(\hat{C})$
can be found in polynomial time, which is possible by Lemma~\ref{l:intersecting}.
The second statement of the lemma follows from Lemma~\ref{l:HH}.
\end{proof}

Now we finish the proof of part~(i) of Theorem~\ref{t:FL}.
The algorithm start with $J=\emptyset$.
At iteration $i$, it finds $\hat{C}_i \in {\cal C}({\cal F}^J)$ and $J_i \subseteq E \setminus J$ with 
$c(J_i) \leq \tau({\cal F}^J)/\nu({\cal F}^J)$ and 
$\nu\left({\cal F}^{J \cup J_i}\right)=\nu\left({\cal F}^J\right)-1$, and adds $J_i$ to $J$; 
such $J_i$ exists and can be found in polynomial time by 
Lemma~\ref{l:poly},
Claim~\ref{c:iterative-merging},
Lemma~\ref{l:HH}, and
Corollary~\ref{c:S-greedy}.
At each iteration $\nu({\cal F}^J)$ decreases by $1$, by Claim~\ref{c:iterative-merging}.
Thus at the end of iteration $i$ we have $\nu({\cal F}^J)=\nu({\cal F})-i$.
Consequently, $c(J_i) \leq \tau({\cal F}^J)/\nu({\cal F}^J) \leq \tau({\cal F})/(\nu({\cal F})-i)$.
Thus at the end of the algorithm, 
$c(J) \leq \sum_i c(J_i) \leq \tau({\cal F}) \sum_i 1/(\nu({\cal F})-i) = \tau({\cal F}) \cdot H(\nu({\cal F}))$.

\subsection{Proof of part~(ii)} \label{s:main}

The following concept plays a central role in the proof of part~(ii) of Theorem~\ref{t:FL}.

\begin{definition} \label{d:q}
A biset-family ${\cal S}$ is {\em intersection-closed} if $\hat{X} \cap \hat{Y} \in {\cal S}$
for any intersecting $\hat{X},\hat{Y} \in {\cal S}$.
An intersection-closed biset-family ${\cal S}$ is {\em $q$-semi-intersecting} if 
$|S| \leq q$ for all $\hat{S} \in {\cal S}$, and if 
$\hat{X} \cup \hat{Y} \in {\cal S}$ for any intersecting $\hat{X},\hat{Y} \in {\cal S}$ with $|X \cup Y| \leq q$.
\end{definition}


\begin{fact} \label{f:q}
If a biset-family ${\cal F}$ is $k$-regular, then the subfamily 
${\cal S}=\{\hat{S} \in {\cal F}: |S| \leq q\}$ of ${\cal F}$ is $q$-semi-intersecting 
(and in particular, is intersection closed) for any $q \leq (n-k)/2$.
\end{fact}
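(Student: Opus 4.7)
The plan is to verify the two defining conditions of a $q$-semi-intersecting family by reducing them to the hypothesis of $k$-regularity, with the key leverage coming from the fact that intersecting bisets automatically have $|X \cap Y| \geq 1$.

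First I would handle intersection-closedness. Take any two intersecting bisets $\hat{X},\hat{Y} \in {\cal S}$. Since $|X|,|Y| \leq q$ and $|X \cap Y| \geq 1$ (by definition of intersecting), the inclusion--exclusion identity gives
\[
|X \cup Y| = |X| + |Y| - |X \cap Y| \leq 2q - 1,
\]
so $|V \setminus (X \cup Y)| \geq n - 2q + 1 \geq n - (n-k) + 1 = k+1$, using the hypothesis $q \leq (n-k)/2$. Now $k$-regularity applies and yields $\hat{X} \cap \hat{Y} \in {\cal F}$. Since $|X \cap Y| \leq |X| \leq q$, this intersection lies in ${\cal S}$, establishing intersection-closedness.

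Next I would handle the union condition. The same calculation shows $\hat{X} \cup \hat{Y} \in {\cal F}$ from $k$-regularity whenever $\hat{X},\hat{Y} \in {\cal S}$ intersect; to place the union in ${\cal S}$, one additionally needs its inner part to have size at most $q$, which is exactly the extra hypothesis $|X \cup Y| \leq q$ imposed in Definition~\ref{d:q}. Together with $|S| \leq q$ holding by construction of ${\cal S}$, both clauses of $q$-semi-intersecting are verified.

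There is really no obstacle here: the only subtlety is the off-by-one, namely that the $k$-regularity definition requires the strict inequality $|V \setminus (X \cup Y)| \geq k+1$, which might at first look incompatible with $q \leq (n-k)/2$ giving only $|X \cup Y| \leq 2q \leq n-k$. The resolution is that ``intersecting'' guarantees $|X \cap Y| \geq 1$, which supplies exactly the missing unit and pushes $|X \cup Y|$ down to $2q-1$, making $k$-regularity applicable.
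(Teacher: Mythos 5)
Your proof is correct. The paper states Fact~\ref{f:q} without proof, leaving it as an easy verification, and your argument supplies exactly that verification: you check the three clauses of Definition~\ref{d:q} directly, and the crux — that $|X \cap Y| \geq 1$ for intersecting bisets gives $|X \cup Y| \leq 2q-1$ and hence $|V \setminus (X \cup Y)| \geq n - 2q + 1 \geq k+1$, so $k$-regularity applies — is precisely the off-by-one observation that makes the bound $q \leq (n-k)/2$ work.
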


The following statement is straightforward.

\begin{lemma} \label{l:HC}
Let ${\cal F}$ be an intersection-closed biset-family.
If $\hat{C} \in {\cal C}({\cal F})$ and $\hat{S} \in {\cal F}$ intersect, then $\hat{C} \subseteq \hat{S}$.
Thus the ${\cal F}$-cores are pairwise disjoint.
\end{lemma}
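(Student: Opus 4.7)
The plan is to apply intersection-closedness directly to $\hat{C}$ and $\hat{S}$ themselves and then exploit the minimality built into the definition of a core.

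First I would observe that since $\hat{C},\hat{S}$ intersect (i.e., $C \cap S \neq \emptyset$) and ${\cal F}$ is intersection-closed, the biset $\hat{C} \cap \hat{S}$ lies in ${\cal F}$. By the definition of biset containment recorded in the paper, $\hat{C} \cap \hat{S} \subseteq \hat{C}$: the inner part is $C \cap S \subseteq C$, and even in the degenerate case $C \cap S = C$ the outer part $C^+ \cap S^+$ is contained in $C^+$. Because $\hat{C}$ is an ${\cal F}$-core, it contains no member of ${\cal F}$ other than itself, so we must have $\hat{C} \cap \hat{S} = \hat{C}$. Unpacking this equality on inner and outer parts gives $C \subseteq S$ and $C^+ \subseteq S^+$, which is exactly $\hat{C} \subseteq \hat{S}$.

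For the second assertion, I would apply the first part symmetrically to two distinct cores $\hat{C}_1,\hat{C}_2 \in {\cal C}({\cal F})$: if they intersect, then $\hat{C}_1 \subseteq \hat{C}_2$ and $\hat{C}_2 \subseteq \hat{C}_1$, forcing $\hat{C}_1 = \hat{C}_2$, a contradiction. Hence distinct cores have disjoint inner parts.

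There is essentially no obstacle here beyond being careful with the biset containment conventions (the $C=S$ edge case for inner parts, and the outer-part condition); once those are handled, the argument is a one-line application of intersection-closedness plus minimality of a core.
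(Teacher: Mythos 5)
Your argument is correct and is exactly the straightforward argument the paper alludes to (it states the lemma without proof): apply intersection-closedness to get $\hat{C}\cap\hat{S}\in{\cal F}$ with $\hat{C}\cap\hat{S}\subseteq\hat{C}$, invoke inclusion-minimality of the core to force equality and hence $\hat{C}\subseteq\hat{S}$, then apply this symmetrically to two intersecting cores. Nothing is missing, and your care about the containment convention only strengthens the write-up.
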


Let ${\cal S}=\{\hat{S} \in {\cal F}: |S| \leq (n-k)/2\}$.
We will give an algorithm that computes an ${\cal S}$-cover of cost at most
$\tau({\cal F}) \cdot \min\left\{1+H\left(\left\lfloor 2n/(n-k+2) \right\rfloor \right),
\left\lfloor \log_2 \left\lfloor (n-k+2)/2 \right\rfloor \right\rfloor \right\}$.
To cover the entire ${\cal F}$, we apply this algorithm twice: 
once on ${\cal F},E$ and once on the ''reversed'' instance with the biset-family being the co-family 
$\{(V-S^+,V-S):(S,S^+) \in {\cal F}\}$ of ${\cal F}$ and with the reverse edge-set $\{uv: vu \in E\}$ of $E$;
after a solution $J'$ to the reversed instance is computed, we return the reversed edge-set of $J'$.
The union of the two partial solutions computed covers the entire ${\cal F}$,
and has cost as stated in part~(ii) of Theorem~\ref{t:FL}.

The algorithm that computes an ${\cal S}$-cover of cost at most 
$\tau({\cal F}) \cdot \lfloor \log_2 \lfloor (n-k+2)/2 \rfloor \rfloor$ is as follows.
Start with $J=\emptyset$, and iteratively, until $\nu({\cal S}^J)=0$, find and add to $J$
a cover $\bigcup_{\hat{C} \in {\cal C}({\cal F})} J_C$ of cost $\leq \tau({\cal F}^J)$
of all families ${\cal F}(\hat{C})$ of ${\cal S}^J$-cores, as in Corollary~\ref{c:S-greedy}.
By Lemma \ref{l:HC} and Claim~\ref{c:iterative-merging}, 
after step $i$ we have $2^i \leq |C| \leq \frac{n-k}{2}$ for every ${\cal S}^J$-core $\hat{C}$.
Hence the number of iterations is at most $\lfloor \log_2 \lfloor (n-k+2)/2 \rfloor \rfloor$. 
As at every iteration we add to $J$ an edge set of cost $\leq \tau({\cal F})$, the total
cost of the ${\cal S}$-cover computed is $\tau({\cal F}) \cdot \lfloor \log_2 \lfloor (n-k+2)/2 \rfloor \rfloor$.

In the next section we will prove the following theorem, which is our main technical result,
and which we believe is of independent interest.

\begin{theorem} \label{t:q-int}
{\sf Biset-Family Edge-Cover} with $q$-semi-intersecting biset-family ${\cal S}$ admits a polynomial time algorithm 
that computes an edge-set $J \subseteq E$ such that $\nu({\cal S}^J) \leq \left\lfloor n/(q+1) \right\rfloor$ 
and $c(J) \leq \tau({\cal S})$. 
\end{theorem}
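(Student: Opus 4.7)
The plan is to compute, for each ${\cal S}$-core $\hat{C}$, an edge set $J_{\hat{C}}$ covering ${\cal F}(\hat{C})$ at cost $\tau({\cal F}(\hat{C}))$, and then set $J=\bigcup_{\hat{C}} J_{\hat{C}}$. First I would extend Lemmas~\ref{l:H} and~\ref{l:intersecting} to the semi-intersecting setting: since ${\cal S}$-cores are pairwise disjoint by Lemma~\ref{l:HC}, each ${\cal F}(\hat{C})$ is itself $q$-semi-intersecting, and its co-family is intersecting in the sense needed for primal-dual --- intersecting pairs in ${\cal F}(\hat{C})$ whose union has inner part of size $\leq q$ admit union-closure by hypothesis, while pairs whose union has larger inner part cannot belong to ${\cal S}$ at all and hence impose no constraint on the co-family. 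A standard primal-dual run, as in Corollary~\ref{c:S-greedy}, then produces $J_{\hat{C}}$ in polynomial time.

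For the cost bound, the argument of Lemma~\ref{l:HH} generalizes: distinct cores $\hat{C}_1,\hat{C}_2$ have disjoint inner parts, so if a single edge covered bisets $\hat{X}\in {\cal F}(\hat{C}_1)$ and $\hat{Y}\in {\cal F}(\hat{C}_2)$, then $\hat{X},\hat{Y}$ would intersect and $\hat{X}\cap \hat{Y}\in {\cal S}$ would contain a core distinct from both, contradicting core minimality. Thus the sub-LPs $\tau({\cal F}(\hat{C}))$ are supported on pairwise disjoint edge sets, giving $c(J)=\sum_{\hat{C}} \tau({\cal F}(\hat{C})) \leq \tau({\cal S})$.

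For the residual count, Claim~\ref{c:iterative-merging} already forces every ${\cal S}^J$-core $\hat{T}$ to contain at least two ${\cal S}$-cores, so $|T|\geq 2$. To reach the sharper bound $\lfloor n/(q+1)\rfloor$, I would charge each residual core $\hat{T}$ to a set $W_{\hat{T}}\subseteq V$ of size $q+1$, built from $T$ together with $q+1-|T|$ privately assigned vertices of $\Gamma(\hat{T})=T^+\setminus T$. Pairwise disjointness of the inner parts of the $W_{\hat{T}}$'s is immediate from Lemma~\ref{l:HC}; the nontrivial claim is that one can select boundary representatives so that the entire sets $W_{\hat{T}}$ are pairwise disjoint, after which an aggregate count $\sum_{\hat{T}}|W_{\hat{T}}| = (q+1)\nu({\cal S}^J) \leq n$ yields the bound.

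The main obstacle is justifying this charging. Since a residual core may have $|T|$ as small as $2$, the denominator $q+1$ cannot come from inner parts alone and must be produced by a structural argument using the $q$-semi-intersecting union-closure: if two residual cores $\hat{T},\hat{T}'$ were forced to share all their boundary representatives, one should be able to combine bisets from the ${\cal F}(\hat{C})$-covers of commonly-contained original cores and invoke union-closure for unions of inner-part size $\leq q$ to produce a biset that already lies in some ${\cal F}(\hat{C})$, and is therefore covered by $J$ --- contradicting the residual status of $\hat{T}$ or $\hat{T}'$. Making this exchange precise is the delicate combinatorial step where I expect the proof to require the most care.
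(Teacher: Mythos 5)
Your proposal has a genuine gap, and it is precisely where you flag the most uncertainty. The paper's Theorem~\ref{t:q-int} is proved via Lemma~\ref{l:P12}: a primal--dual algorithm that runs \emph{until ${\cal S}$ is entirely covered} (Phase 1), incrementally building a family ${\cal U}$ of pairwise-disjoint tight bisets, followed by a reverse-delete (Phase 2) that prunes $J$ while preserving coverage of ${\cal S}[{\cal U}]$. After pruning, the structural Property~1 of Lemma~\ref{l:P12} says that any residual ${\cal S}^J$-core $\hat{C}$, when unioned with the bisets of ${\cal U}$ it intersects, produces a biset $\hat{B}_C \notin {\cal S}$, and that these augmented bisets have pairwise-disjoint inner parts. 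Since ${\cal S}$ is $q$-semi-intersecting and $\hat{B}_C$ would otherwise be in ${\cal S}$, this forces $|B_C| \geq q+1$, and disjointness gives $\nu({\cal S}^J) \leq \lfloor n/(q+1)\rfloor$. Property~2 (an LP primal--dual argument exploiting the weakly-intersecting structure of ${\cal S}[{\cal U}]$) gives $c(J) \leq \tau({\cal S})$. The iteration past the initial cores plus the reverse-delete is essential; neither ingredient appears in your plan.

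Your plan covers only $\bigcup_{\hat{C}}{\cal F}(\hat{C})$ over the \emph{initial} ${\cal S}$-cores. This guarantees, by Claim~\ref{c:iterative-merging}, that every residual core contains at least two initial cores, but that is all: its inner part may have size as small as $2$, irrespective of $q$. The boundary-charging you propose cannot save this. In the set-family case ($\gamma({\cal S})=0$) residual cores have no boundary at all, so there is nothing to charge; more generally there is no reason boundary vertices can be assigned privately across residual cores, and the exchange argument you sketch (combining bisets from the ${\cal F}(\hat{C})$-covers via union-closure) has nothing to invoke once the candidate union's inner part exceeds $q$ --- which is exactly the regime where the argument is needed. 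In short, a single pass over the initial cores gives at best $\nu({\cal S}^J) \leq \nu({\cal S})/2$, and iterating it to shrink the count further costs another $\tau({\cal S})$ per round, yielding the weaker $\tau({\cal S})\cdot O(\log)$ bound (the paper's alternate bound in part~(ii)), not the $c(J)\leq\tau({\cal S})$ claimed in Theorem~\ref{t:q-int}.

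A secondary issue: your assertion that the co-family of ${\cal F}(\hat{C})$ is ``intersecting in the sense needed for primal-dual'' is not correct for $q$-semi-intersecting ${\cal S}$. Intersection of two co-bisets corresponds to the union $\hat{X}\cup\hat{Y}$ of the originals, which need not lie in ${\cal S}$ when $|X\cup Y|>q$; the excuse that such unions ``impose no constraint'' does not rescue the optimality guarantee of a standard primal--dual, since the LP can still have an integrality gap when the family is only intersection-closed. The paper sidesteps this by proving optimality not for ${\cal F}(\hat{C})$ but for ${\cal S}[{\cal U}]$, using weak-intersection inside a fixed $\hat{U}\in{\cal U}$ together with the reverse-delete structure (the Claim inside Lemma~\ref{l:P12} that extracts witnesses $\hat{W}',\hat{W}''\subseteq\hat{U}$ is where this is used). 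Your disjointness argument for the LP decomposition and the cost inequality $\sum_{\hat{C}}\tau({\cal F}(\hat{C}))\leq\tau({\cal S})$ is fine, but it is only useful if each $J_{\hat{C}}$ is computed at LP optimum, which the sketch does not actually establish.
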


Note that for $q=n$ any intersecting biset-family is $q$-semi-intersecting, hence 
$q$-semi-intersecting biset-families generalize intersecting biset-families, 
and then the algorithm in Theorem~\ref{t:q-int} computes an optimal ${\cal S}$-cover of cost $\tau({\cal S})$.

Part~(ii) of Theorem~\ref{t:FL} follows from Theorem~\ref{t:q-int} and part~(i).
Compute an edge set $J$ as in Theorem~\ref{t:q-int} and then compute 
a cover $F$ of the residual family ${\cal S}^J$ using the algorithm from part~(i). 
The cost of the ${\cal S}$-cover $J+F$ computed is bounded by
$$
c(J+F)=c(J)+c(F) \leq
\tau({\cal S}) + H(\nu({\cal S}^J)) \cdot \tau({\cal F}) \leq
\tau({\cal F}) \cdot (1+H(\lfloor 2n/(n-\ell) \rfloor)) \ .$$

\subsection{Proof of part~(iii)}

Note that the algorithm from part~(ii), if applied on an arbitrary crossing $k$-regular 
biset-family ${\cal F}$ with $\gamma({\cal F}) \leq k$, 
returns an edge set $J$ of costs $c(J)=\tau({\cal F}) \cdot O\left(\log \frac{n}{n-k}\right)$,
such that the residual family ${\cal F}^J$ has the following property: 
the size of the inner part of every biset
in ${\cal F}^J$ or in the co-family of ${\cal F}^J$ is larger than $q=\frac{n-k}{2}$.
Consequently, to prove part~(iii) of Theorem~\ref{t:FL} it is sufficient to prove
the following.

\begin{lemma} \label{l:q}
{\sf Biset-Family Edge-Cover} with crossing ${\cal F}$
admits a polynomial time algorithm that computes an edge-cover 
of ${\cal F}$ of cost at most 
$\tau({\cal F}) \cdot (n/q) \cdot H\left(\left\lfloor n/q \right\rfloor\right)$,
provided that $|S| \geq q$ holds
for every biset $\hat{S}$ that belongs to ${\cal F}$ or to the co-family of ${\cal F}$.
\end{lemma}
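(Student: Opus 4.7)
My plan is to derive Lemma~\ref{l:q} as a direct consequence of Theorem~\ref{t:FL}(i): that theorem already yields an edge-cover of cost at most $H(\nu({\cal F})) \cdot \tau({\cal F})$, so it suffices to combine it with a combinatorial upper bound on $\nu({\cal F})$ extracted from the hypothesis on $|S|$ and $|V \setminus S^+|$. The first ingredient is that distinct ${\cal F}$-cores cannot cross: if $\hat{C}_1 \neq \hat{C}_2$ were crossing cores, then $\hat{C}_1 \cap \hat{C}_2 \in {\cal F}$ would be a proper sub-biset of the core $\hat{C}_1$, contradicting minimality. This is the core-case of Lemma~\ref{l:HH}.

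Next I would bound the number of cores incident to any fixed node $u \in V$. Let $\hat{C}_1,\ldots,\hat{C}_t$ enumerate all ${\cal F}$-cores with $u \in C_i$. Every pair intersects (sharing $u$) yet does not cross, forcing $C_i^+ \cup C_j^+ = V$ for $i \neq j$. Consequently the sets $V \setminus C_i^+$ are pairwise disjoint subsets of $V \setminus \{u\}$, and by the hypothesis on the co-family each has size at least $q$. Hence $t q \leq n-1$, giving $t \leq \lfloor n/q \rfloor$. A standard double-counting, using also the hypothesis $|C| \geq q$ on each core, now gives
\[
\nu({\cal F}) \cdot q \;\leq\; \sum_{\hat{C} \in {\cal C}({\cal F})} |C| \;=\; \sum_{u \in V} |\{\hat{C} : u \in C\}| \;\leq\; n \cdot \lfloor n/q \rfloor,
\]
so $\nu({\cal F}) \leq \lfloor n/q \rfloor^{2}$.

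Finally I invoke Theorem~\ref{t:FL}(i) applied to ${\cal F}$ to compute, in polynomial time, an edge-cover of cost at most $H(\nu({\cal F})) \cdot \tau({\cal F})$. The elementary inequality $H(m^{2}) \leq m \cdot H(m)$ for all $m \geq 1$ (which follows by partitioning $\{1,\ldots,m^{2}\}$ into $m$ consecutive blocks of length $m$ and bounding reciprocals block-by-block) then gives $H(\nu({\cal F})) \leq \lfloor n/q \rfloor \cdot H(\lfloor n/q \rfloor) \leq (n/q) \cdot H(\lfloor n/q \rfloor)$, which is precisely the claimed bound. The only non-routine step is the node-incidence bound above; everything else is either borrowed directly from part~(i) or is a short counting argument.
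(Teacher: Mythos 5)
Your approach is genuinely different from the paper's, and it is essentially sound, but there is a small numerical gap in your core-counting step. The paper proves this lemma via a hitting-set argument: it observes (Lemma~\ref{l:lov}) that the inner parts of the ${\cal F}$-cores form a hypergraph with maximum degree at most $\lfloor n/q \rfloor$ and fractional transversal number at most $n/q$, so a greedy transversal $T$ of size at most $(n/q) H(\lfloor n/q \rfloor)$ exists; for each $s \in T$ the co-family of ${\cal F}_s = \{\hat{S} \in {\cal F}: s \in S\}$ is intersecting and can be covered optimally at cost at most $\tau({\cal F})$, and taking the union over $s \in T$ does the job. You instead bound the total number of cores and then invoke part~(i) directly --- a different decomposition entirely.

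Concerning the gap: from $\nu({\cal F}) \cdot q \leq n \cdot \lfloor n/q \rfloor$ you may conclude only $\nu({\cal F}) \leq (n/q)\lfloor n/q \rfloor$, not $\nu({\cal F}) \leq \lfloor n/q \rfloor^2$; for instance with $n=10,\ q=3$ the former bound is $10$ while $\lfloor n/q \rfloor^2 = 9$. The desired harmonic inequality does still hold, but needs a slightly more careful block argument: writing $r=n/q$, $m=\lfloor r \rfloor$, one has $\lfloor rm \rfloor = m^2 + \lfloor (r-m)m \rfloor$, and the tail $\sum_{i=m^2+1}^{\lfloor rm\rfloor} 1/i \leq (r-m)m/(m^2+1) \leq (r-m) \leq (r-m)H(m)$, so $H(\lfloor rm \rfloor) \leq mH(m) + (r-m)H(m) = rH(m)$. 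So your conclusion is salvageable, but the step ``so $\nu({\cal F}) \leq \lfloor n/q\rfloor^2$'' is not literally implied by what precedes it.

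A further point worth noting: your bound is in fact substantially \emph{stronger} than the lemma claims. You establish $\nu({\cal F}) \leq (n/q)^2$, and part~(i) gives a cover of cost $H(\nu({\cal F}))\cdot\tau({\cal F}) = O(\log(n/q))\cdot\tau({\cal F})$, which is a logarithmic factor rather than the nearly-linear $(n/q) H(\lfloor n/q\rfloor)$ of the lemma statement; you then discard this advantage by invoking the loose inequality $H(m^2) \leq mH(m)$ purely to match the stated bound. If your counting argument is correct --- and I do not see a flaw in it beyond the numerical slip above --- it would immediately improve the ratio in Theorem~\ref{t:FL}(iii) from $O\bigl(\frac{n}{n-k}\log\frac{n}{n-k}\bigr)$ to $O\bigl(\log\frac{n}{n-k}\bigr)$, matching part~(ii). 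You should double-check whether you are missing a constraint that the paper's author saw, because an improvement this clean to a stated theorem warrants scrutiny; but as far as I can verify, the double-counting of core/inner-part incidences is legitimate.
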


In the rest of this section we prove Lemma~\ref{l:q}. The key observation is the following.

\begin{lemma} \label{l:lov}
Let ${\cal F}$ be a crossing biset-family on $V$ with $|S| \geq q$
for every biset $\hat{S}$ in ${\cal F}$ or in the co-family of ${\cal F}$.
Then there exists $T \subseteq V$ of size 
$|T| \leq (n/q) \cdot H\left(\left\lfloor n/q \right\rfloor\right)$ 
such that $T \cap S \neq \emptyset$ for every $\hat{S} \in {\cal F}$.
\end{lemma}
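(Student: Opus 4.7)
I will reduce the problem to hitting only the inner parts of the cores of $\mathcal{F}$, then prove a structural bound of $n/q$ on the number of cores containing any fixed vertex, and conclude via a standard set-cover argument. For the reduction, note that every $\hat{S}\in\mathcal{F}$ contains some $\mathcal{F}$-core $\hat{C}$ in the biset order, so $C\subseteq S$; hence any $T\subseteq V$ that meets the inner part $C$ of every $\hat{C}\in\mathcal{C}(\mathcal{F})$ already meets every $\hat{S}\in\mathcal{F}$.

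\emph{Key degree bound.} For every $v\in V$, at most $n/q$ cores have $v$ in their inner part. Let $\hat{C}_1,\ldots,\hat{C}_\ell$ be all such cores. Any two of them intersect (they share $v$), but by the reasoning of Lemma~\ref{l:HH} no two distinct cores of the crossing family $\mathcal{F}$ can cross---else $\hat{C}_i\cap\hat{C}_j$ would lie in $\mathcal{F}$ and be strictly contained in one of the cores, contradicting minimality. Hence $C_i^+\cup C_j^+=V$ for every $i\neq j$, so the complements $V\setminus C_i^+$ are pairwise disjoint. The co-family hypothesis forces $|V\setminus C_i^+|\geq q$, so $\ell q\leq n$.

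\emph{Finishing via set cover.} Double-counting incidences, $q\cdot|\mathcal{C}(\mathcal{F})|\leq\sum_{\hat{C}}|C|=\sum_v|\{\hat{C}:v\in C\}|\leq n\cdot(n/q)$, giving $|\mathcal{C}(\mathcal{F})|\leq(n/q)^2$. The uniform fractional solution $x_v=1/q$ is feasible for the hitting-set LP and certifies $\tau^*\leq n/q$. Viewing the hitting-set task as a set-cover instance with universe $\mathcal{C}(\mathcal{F})$ and covering sets $A_v=\{\hat{C}:v\in C\}$, each of size at most $n/q$, the LP-based analysis of the greedy algorithm (equivalently Lov\'asz's bound on the ratio of integer to fractional covers) produces an integer hitting set of size at most $\tau^*\cdot H(\max_v|A_v|)\leq(n/q)\,H(\lfloor n/q\rfloor)$, which is the desired $T$.

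\emph{Main obstacle.} The crucial step is the degree bound $\ell\leq n/q$; without it, a generic greedy or randomized-rounding argument over all cores only yields $|T|=O((n/q)\log n)$, losing a factor of roughly $\log n/\log(n/q)$ in the regime where $q$ is close to $n$. This is precisely the step that requires both the crossing property of $\mathcal{F}$ and the symmetric size assumption on the co-family: the crossing property forces non-crossing pairs of distinct cores, and the co-family assumption turns ``non-crossing'' into a genuine packing constraint on the complements $V\setminus C_i^+$.
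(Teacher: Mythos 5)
Your proof is correct and follows essentially the same route as the paper: a uniform fractional hitting set of value $n/q$ for the inner parts of the cores, a degree bound of $\lfloor n/q\rfloor$ obtained from the pairwise non-crossing of cores together with the size hypothesis on the co-family, and then the greedy/Lov\'asz bound $\tau^* \cdot H(\text{max degree})$. The extra double-counting step bounding $|\mathcal{C}(\mathcal{F})|$ is unnecessary but harmless, and the explicit reduction to cores is implicit in the paper's choice of hypergraph.
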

\begin{proof}
Consider the hypergraph ${\cal H}=\{C: \hat{C} \in {\cal C}({\cal F})\}$
of the inner parts of the cores of ${\cal F}$.
We combine the following two observations.  
\begin{itemize}
\item[(i)]
The function $t$ on $V$ defined by $t(v)=\frac{1}{q}$ for all $v \in V$ is a fractional 
hitting-set of ${\cal H}$ (namely $\sum_{v \in S} t(v) \geq 1$ for all $C \in {\cal H}$)
of value $\sum_{v \in V} t(v) = n/q$. 
\item[(ii)]
The maximum degree in the hypergraph ${\cal H}$ is at most $\lfloor n/q \rfloor$.
\end{itemize}
Given observations (i) and (ii), the greedy algorithm 
computes a subset $T \subseteq V$ as stated.
Observation~(i) follows from the assumption that $|S| \geq q$ for all $\hat{S} \in {\cal F}$.
We prove (ii). Since ${\cal F}$ is crossing, the members of ${\cal C}({\cal F})$ 
are pairwise non-crossing. Thus if ${\cal C} \subseteq {\cal C}({\cal F})$ is a set of cores
which inner parts contain the same element $v \in V$, then the sets in 
$\{V \setminus C^+:\hat{C} \in {\cal C}\}$ are pairwise disjoint.
As each of these sets is an inner part of a biset in the co-family of ${\cal F}$,
the number of such sets is at most $\lfloor n/q \rfloor$. Observation~(ii) follows.
\end{proof}

Lemma~\ref{l:q} easily follows from Lemma~\ref{l:lov}.
Note that if ${\cal F}$ is crossing, then 
for every $s \in V$ the co-family of the biset-family 
${\cal F}_s=\{\hat{S} \in {\cal F}:v \in S\}$ is intersecting.
Thus given an instance of {\sf Biset-Family Edge-Cover} and $s \in V$,
we can compute in polynomial time an edge-cover $J_s$ of ${\cal F}_s$ of cost
$c(J_s) \leq \tau({\cal F}_s) \leq \tau({\cal F})$.
Now let $T$ be as in Lemma~\ref{l:lov}. For every $s \in T$ we compute an edge-cover 
$J_s$ of ${\cal F}_s$ as above, and return $J=\cup_{s \in T}J_s$. 
This concludes the proof of Lemma~\ref{l:q} and thus also the proof of part~(iii)
of Theorem~\ref{t:FL} is now complete.

\section{Proof of Theorem~\ref{t:q-int}} \label{s:q-int}

\begin{definition}
For biset-families ${\cal S}$ and ${\cal U}$ on $V$ let 
$${\cal S}\left[{\cal U}\right]=
\{\hat{S} \in {\cal S}: \hat{S} \subseteq \hat{U} \mbox{ for some } \hat{U} \in {\cal U}\} \ .$$
A biset-family ${\cal S}$ is {\em weakly-intersecting} if ${\cal S}\left[\{\hat{U}\}\right]$ is 
an intersecting biset-family for every $\hat{U} \in {\cal S}$.
\end{definition}

Clearly, any $q$-semi-intersecting biset-family is weakly-intersecting.
Note that if ${\cal U} \subseteq {\cal S}$ 
and if the members of ${\cal U}$ are pairwise disjoint, 
then ${\cal S}[{\cal U}]$ is an intersecting biset-family, if ${\cal S}$ is weakly-intersecting.
We will prove the following refinement of Theorem~\ref{t:q-int}.

\begin{lemma} \label{l:P12}
{\sf Biset-Family Edge-Cover} with weakly-intersecting biset-family ${\cal S}$ admits a polynomial time algorithm 
that computes a sub-family ${\cal U} \subseteq {\cal S}$ of pairwise disjoint bisets  
and an edge-set $J \subseteq E$ such that the following two properties hold.
\begin{description}
\item[Property 1:] 
For any ${\cal S}^J$-core $\hat{C}$ the following holds:
\begin{itemize}
\item[{\em (i)}]
If $\hat{U} \in {\cal U}$ and $\hat{C}$ intersect then $\hat{U}$ intersects 
no ${\cal S}^J$-core distinct from $\hat{C}$. 
\item[{\em (ii)}]
The union $\hat{B}_C$ of $\hat{C}$ and the bisets in ${\cal U}$ intersecting with $\hat{C}$
is not in ${\cal S}$.
\end{itemize}
\item[Property 2:] 
$J$ is an optimal cover of ${\cal S}[{\cal U}]$, thus
$c(J)=\tau({\cal S}[{\cal U}]) \leq \tau({\cal S})$. 
\end{description}
\end{lemma}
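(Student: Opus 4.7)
The plan is to design an iterative algorithm that maintains a pair $({\cal U}, J)$ with two invariants: ${\cal U} \subseteq {\cal S}$ is pairwise disjoint, and $J$ is an optimal cover of ${\cal S}[{\cal U}]$. Starting from ${\cal U} = J = \emptyset$, each iteration will fix one violation of Property~1 until none remains.

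First I would prove a simple preparatory lemma: whenever ${\cal U}$ is pairwise disjoint, the family ${\cal S}[{\cal U}]$ is intersecting. For any $\hat X, \hat Y \in {\cal S}[{\cal U}]$ with $X \cap Y \neq \emptyset$, the non-empty inner-part intersection lies inside a \emph{unique} $\hat U \in {\cal U}$, so both $\hat X, \hat Y \in {\cal S}[\{\hat U\}]$. Since ${\cal S}$ is weakly-intersecting, ${\cal S}[\{\hat U\}]$ is intersecting, hence $\hat X \cap \hat Y, \hat X \cup \hat Y \in {\cal S}[\{\hat U\}] \subseteq {\cal S}[{\cal U}]$. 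Consequently $J$ can always be (re)computed in polynomial time by the standard primal--dual algorithm for intersecting biset-families, and $c(J) = \tau({\cal S}[{\cal U}]) \le \tau({\cal S})$, which yields Property~2 for free.

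The main loop will then fix violations of Property~1 one at a time. If some ${\cal S}^J$-core $\hat C$ satisfies $\hat B_C \in {\cal S}$ (Case B, Property 1(ii) fails), I remove the bisets $\{\hat U \in {\cal U}: \hat U \cap \hat C \neq \emptyset\}$ from ${\cal U}$ and insert $\hat B_C$ in their place; pairwise disjointness is preserved because $\hat B_C$ is precisely the union of $\hat C$ with the removed ${\cal U}$-bisets, and the remaining ${\cal U}$-bisets are disjoint from $\hat C$ and hence from $\hat B_C$. If some $\hat U \in {\cal U}$ meets two distinct ${\cal S}^J$-cores $\hat C_1, \hat C_2$ (Case A, Property 1(i) fails), the strategy is either to reduce to Case~B by showing that $\hat B_{C_1}$ or $\hat B_{C_2}$ lies in ${\cal S}$, or, failing that, to replace $\hat U$ by a collection of strictly smaller pairwise-disjoint bisets extracted as cores of the intersecting family ${\cal S}[\{\hat U\}]$, each meeting only one of the two ${\cal S}^J$-cores. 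Termination is then established via a lexicographic monovariant such as $\bigl(|V \setminus \bigcup_{\hat U \in {\cal U}} U|,\, -|{\cal U}|\bigr)$, which strictly decreases under each update and is bounded by a polynomial in $n$.

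The \emph{main obstacle} is precisely the Case-A update: because ${\cal S}$ is only weakly-intersecting rather than intersection-closed, I cannot simply assert $\hat C_i \cap \hat U \in {\cal S}$, so the natural ``project the core onto $\hat U$'' move breaks down. The correct argument must combine the local intersection-closure of ${\cal S}[\{\hat U\}]$ with the inclusion-minimality of each $\hat C_i$ in ${\cal S}^J$ to either force $\hat B_{C_i} \in {\cal S}$ or produce a usable split of $\hat U$; carrying this out cleanly, together with a careful check that the monovariant still decreases in every case, is where I expect the bulk of the technical work to lie.
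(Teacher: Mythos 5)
Your overall framing — maintain a pairwise-disjoint ${\cal U}\subseteq{\cal S}$, note that ${\cal S}[{\cal U}]$ is then intersecting so an optimal $J$ covering it is computable, and iteratively repair violations of Property~1 — is a genuinely different route from the paper's. The paper does \emph{not} iterate: it runs a single modified primal-dual pass that simultaneously builds ${\cal U}$ (adding the current ${\cal S}^J$-core each round and discarding the ${\cal U}$-bisets it swallows) and an edge-set $F$ covering \emph{all} of ${\cal S}$, then reverse-deletes $F$ down to a $J$ covering only ${\cal S}[{\cal U}]$. The crucial structural facts this yields — every edge of $F$ has its tail inside the inner part of some $\hat U\in{\cal U}$, and $|\delta_F(\hat U)|=1$ for every $\hat U\in{\cal U}$ — are then enough to prove Property~1 as a \emph{static} lemma about $(F,{\cal U},J)$, using only Fact~\ref{f:XY} and intersection-closure of ${\cal S}$. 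In other words, the paper never has to ``fix'' a Property-1 violation; the violations simply never occur because of how $F$ and ${\cal U}$ were built. Your preparatory observation that ${\cal S}[{\cal U}]$ is intersecting is correct and is essentially the ingredient the paper uses to get Property~2 (via complementary slackness with a dual built during the same pass), so that part transfers.

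The genuine gap is the one you flag yourself: the Case-A repair (a single $\hat U\in{\cal U}$ meeting two distinct ${\cal S}^J$-cores). You propose to either show $\hat B_{C_i}\in{\cal S}$ for some $i$, or ``split'' $\hat U$ into smaller disjoint bisets, but you give no argument that either alternative is actually available, and there is no obvious reason it should be: with $J$ recomputed from scratch as a generic optimal cover of ${\cal S}[{\cal U}]$, $J$ need not satisfy the tail/one-edge-per-$\hat U$ structure that the paper's proof leans on, and without it the projection $\hat C_i\cap\hat U$ need not even lie in ${\cal S}$. In addition, the proposed monovariant $\bigl(|V\setminus\bigcup_{\hat U\in{\cal U}}U|,\,-|{\cal U}|\bigr)$ does not strictly decrease in Case~B: if $\hat C$'s inner part is already contained in the union of the inner parts of the $m\ge 2$ members of ${\cal U}$ it meets, then replacing those $m$ bisets by $\hat B_C$ leaves $\bigcup_{\hat U\in{\cal U}}U$ unchanged while $|{\cal U}|$ \emph{drops}, so $-|{\cal U}|$ \emph{increases} and the lexicographic pair goes up. Even setting termination aside, recomputing $J$ after each repair can change the set of ${\cal S}^J$-cores arbitrarily and create fresh violations, so convergence would need a much more delicate potential argument than the one sketched. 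You would likely find it easier to abandon the local-repair loop and instead prove, as the paper does, that a single primal-dual pass already produces a $(F,{\cal U})$ pair with the tail/one-edge structure, from which Property~1 follows directly.
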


To see that Lemma~\ref{l:P12} implies Theorem~\ref{t:q-int},
it is sufficient to show that if ${\cal S}$ is $q$-semi-intersecting, then 
Property~1 implies $\nu({\cal S}^J) \leq \left\lfloor n/(q+1) \right\rfloor$.
Let $B_C$ be the inner part of $\hat{B}_C$.
Note that the sets $B_C$ are pairwise disjoint;
this is since the members of ${\cal U}$ are pairwise disjoint, the ${\cal S}^J$-cores are pairwise disjoint, 
and since by Property~1(i) every $\hat{U} \in {\cal U}$ intersects at most one ${\cal S}^J$-core.
Now observe that if ${\cal S}$ is $q$-semi-intersecting then $|B_C| \geq q+1$, 
since $\hat{B}_C \notin {\cal S}$, by Property~1(ii).
Consequently, $\nu({\cal S}^J) \leq \left\lfloor n/(q+1) \right\rfloor$,
and Lemma~\ref{l:P12} implies Theorem~\ref{t:q-int}.

In the rest of this section we prove Lemma~\ref{l:P12}.
The algorithm is a variation of a standard primal-dual algorithm
for covering an {\em intersecting} biset-family, and it only needs that the 
${\cal S}^J$-cores can be computed in polynomial time. 
The dual LP of the LP-relaxation {\bf (P)} from the Introduction is:
\[ \displaystyle
\begin{array} {llll} \mbox{\bf (D)}
& \max          & \ \ \ \displaystyle \sum_{\hat{S} \in {\cal S}} y_S  \\
& \ \mbox{s.t.} & \displaystyle \sum_{\hat{S} \in {\cal S}: e \in \delta(\hat{S})} y_{\hat{S}} \leq c_e 
                  \ \ \ \ \ \forall e \in E \\
&               & \ \ \ \ y_{\hat{S}} \geq 0 \ \ \ \ \ \ \ \ \ \ \ \ \ \ \ \forall \hat{S} \in {\cal S}
\end{array}
\]
Given a partial solution $y$ to {\bf (D)}, an edge $e \in E$ is {\em tight}
if the inequality in {\bf (D)} that corresponds to $e$ holds with equality.
The algorithm produces an edge set $J \subseteq E$, a sub-family ${\cal U} \subseteq {\cal S}$ of ${\cal S}$,
and a dual solution $y$ to {\bf (D)}, such that the following holds: 
$J$ covers ${\cal S}[{\cal U}]$, $y$ is a feasible solution to {\bf (D)},
and (the characteristic vector of) $J$ and $y$ satisfy the complementary slackness conditions: 

\noindent
$\displaystyle \mbox{\em Primal Complementary Slackness Conditions:} \ \   e \in J \Longrightarrow \mbox{$e$ is tight}$; \\
$\displaystyle \mbox{\em Dual Complementary Slackness Conditions:} \ \ \  y_S>0 \Longrightarrow |\delta_J(\hat{S})|=1$.

{\bf Phase~1} starts with $J=\emptyset$ and ${\cal U}=\emptyset$ and applies a sequence of iterations.
At each iteration we choose some ${\cal F}^J$-core $\hat{C}$ and do the following:
\begin{enumerate}
\item
Add $\hat{C}$ to ${\cal U}$ and exclude from ${\cal U}$ the bisets contained in $\hat{C}$.
\item
Raise (possibly by zero) the dual variable corresponding to $\hat{C}$, 
until some edge $e \in E \setminus J$ covering $\hat{C}$ becomes tight, and add $e$ to $J$.
\end{enumerate}

Phase~1 terminates when $\nu({\cal S}^J)=0$, namely, when $J$ covers ${\cal S}$. 

\vspace{0.2cm}

{\bf Phase~2} applies on $J$ ``reverse delete'' like the family ${\cal S}[{\cal U}]$ 
is the one we want to cover, which means the following.
Let $J=\{e_1, \ldots, e_j\}$, where $e_{i+1}$ was added after $e_i$. 
For $i=j$ downto $1$, we delete $e_i$ from $J$ if $J-\{e_i\}$ still covers 
the family ${\cal S}[{\cal U}]$.
This can be implemented in polynomial time as follows. 
When an edge $e \in J$ is checked for deletion,
${\cal S}[{\cal U}]$ is covered by $J \setminus \{e\}$ if, and only if, 
no $\hat{U} \in {\cal U}$ contains an ${\cal S}^{J \setminus \{e\}}$-core.   
At the end of the algorithm, $J$ is output.

\vspace{0.2cm}

Summarizing, the algorithm is a variation of a standard primal-dual algorithm 
for intersecting biset-families, with the following changes.
\begin{enumerate}
\item
Unlike a standard primal-dual algorithm in which {\em all} the dual variables corresponding to 
cores are raised uniformly, we raise the dual variable of only {\em one} core.
\item
The algorithm maintains a biset-family ${\cal U} \subseteq {\cal S}$.
In each iteration, we add to ${\cal U}$ the corresponding tight ${\cal F}^J$-core $\hat{C}$,
and exclude from ${\cal U}$ all the bisets contained in $\hat{C}$, if any.
\item
While at Phase~1 the algorithm intends to cover the entire biset-family ${\cal S}$,
Phase~2 (reverse-delete) is applied like the family ${\cal S}[{\cal U}]$ is the one that we want to cover.
\end{enumerate}

Using Lemma~\ref{l:poly}, it is easy to see that the algorithm can be implemented in polynomial time,
under the oracles assumed. Using Lemma~\ref{l:HC} it is also easy to see the following.

\begin{claim} \label{c:feasible}
During the algorithm the following holds:
the members of ${\cal U}$ are pairwise disjoint,
every edge in $J$ has its tail in the inner part of some $\hat{U} \in {\cal U}$, and
$|\delta_{J}(\hat{U})|=1$ for every $\hat{U} \in {\cal U}$.
\end{claim}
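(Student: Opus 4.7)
The plan is to prove the three properties simultaneously by induction on the iterations of Phase~1, and then observe that Phase~2 preserves each of them. The base case $J=\emptyset,\ {\cal U}=\emptyset$ is vacuous. At iteration $t$ of Phase~1, the algorithm picks some ${\cal S}^{J_{t-1}}$-core $\hat{C}$, sets ${\cal U}_t=({\cal U}_{t-1}\setminus\{\hat{U}:\hat{U}\subseteq\hat{C}\})\cup\{\hat{C}\}$, and adds one tight edge $e$ covering $\hat{C}$ to form $J_t$; I would verify the three conditions for ${\cal U}_t,J_t$.

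I expect conditions~(2) and~(3) to reduce to bookkeeping once pairwise disjointness is in hand. For~(2), the new edge $e$ has its tail in $C$, the inner part of $\hat{C}\in{\cal U}_t$; for any older edge $e'\in J_{t-1}$ whose inductive owner $\hat{U}'\in{\cal U}_{t-1}$ happens to be absorbed (i.e.\ $\hat{U}'\subseteq\hat{C}$), the tail of $e'$ lies in $U'\subseteq C$, so $\hat{C}$ inherits ownership, while unabsorbed owners simply stay in ${\cal U}_t$. For~(3), $\hat{C}$ is covered by $e$ and by no edge of $J_{t-1}$ (since $\hat{C}\in{\cal S}^{J_{t-1}}$, so $\hat{C}$ is uncovered by $J_{t-1}$); and for each surviving $\hat{U}\in{\cal U}_{t-1}$ not absorbed by $\hat{C}$, pairwise disjointness $U\cap C=\emptyset$ from~(1) ensures the tail of $e$ avoids $U$, so $e$ does not cover $\hat{U}$, keeping $|\delta_{J_t}(\hat{U})|=|\delta_{J_{t-1}}(\hat{U})|=1$.

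The main obstacle will be~(1): for every $\hat{U}\in{\cal U}_{t-1}$ with $U\cap C\neq\emptyset$ I must deduce $\hat{U}\subseteq\hat{C}$, so that $\hat{U}$ is genuinely removed at this iteration and cannot remain in ${\cal U}_t$ alongside $\hat{C}$. My approach is to recall that $\hat{U}$ was added at some earlier iteration $t'<t$ as an inclusion-minimal member of ${\cal S}^{J_{t'-1}}$; since $J_{t'-1}\subseteq J_{t-1}$ one has $\hat{C}\in{\cal S}^{J_{t-1}}\subseteq{\cal S}^{J_{t'-1}}$, and the minimality of $\hat{U}$ immediately forbids $\hat{C}\subsetneq\hat{U}$. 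The hard part will be the remaining incomparable case, where neither $\hat{U}$ nor $\hat{C}$ contains the other yet their inner parts meet. Here I plan to invoke weak-intersection via Lemma~\ref{l:HC}, applied to the intersection-closed subfamily ${\cal S}^{J_{t'-1}}[\{\hat{W}\}]$ for a container $\hat{W}\in{\cal S}^{J_{t'-1}}$ of both $\hat{U}$ and $\hat{C}$: in that subfamily both bisets appear, $\hat{U}$ is still a core, and Lemma~\ref{l:HC} forces $\hat{U}\subseteq\hat{C}$. Locating such a $\hat{W}$ is the crux, because weak-intersection does not supply a universal intersection-closure; I would exploit the chain of cores and edges produced between iterations $t'$ and $t$ (together with condition~(2), which pins the tail of every edge of $J$ inside some member of ${\cal U}$) to manufacture one.

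Finally, Phase~2 should preserve the three properties effortlessly: ${\cal U}$ is not modified so~(1) persists verbatim; edges only leave $J$ so~(2) remains valid for the surviving edges; and for~(3), each $\hat{U}\in{\cal U}$ belongs to ${\cal S}[{\cal U}]$ (since $\hat{U}\in{\cal S}$ and $\hat{U}\subseteq\hat{U}\in{\cal U}$), so its unique covering edge cannot be removed by reverse-delete without leaving $\hat{U}$ uncovered, while deleting other edges cannot raise the count past $1$.
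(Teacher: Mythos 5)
Your skeleton — induction over the Phase~1 iterations, followed by the observation that Phase~2 preserves all three properties — matches the intended argument, and your handling of conditions~(2) and~(3), as well as the Phase~2 preservation, is correct once pairwise disjointness is established. The gap is precisely where you flag it: the incomparable case in~(1). You propose to locate a container $\hat{W}\in{\cal S}^{J_{t'-1}}$ of both $\hat{U}$ and $\hat{C}$ so as to apply Lemma~\ref{l:HC} inside the intersecting subfamily ${\cal S}^{J_{t'-1}}[\{\hat{W}\}]$, but you never produce such a $\hat{W}$, and I do not see how to ``manufacture'' one from the run of the algorithm: weak intersection by itself yields no common ancestor in ${\cal S}$ for two cores chosen at different iterations, and the chain of edges and cores does not obviously supply one.

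The paper's own one-line proof is more direct: it invokes Lemma~\ref{l:HC} on ${\cal S}^{J_{t'-1}}$ itself. For this to be legitimate, ${\cal S}^{J_{t'-1}}$ must be intersection-closed, which holds as soon as ${\cal S}$ is intersection-closed (if some $e\in J$ covered $\hat{X}\cap\hat{Y}$, then by Fact~\ref{f:XY}(i) it would cover $\hat{X}$ or $\hat{Y}$, contradicting $\hat{X},\hat{Y}\in{\cal S}^J$). That hypothesis is not spelled out in the statement of Lemma~\ref{l:P12}, which only says ``weakly-intersecting,'' but it is tacitly in force: the lemma immediately after Claim~\ref{c:feasible} is explicitly stated for intersection-closed ${\cal S}$, and the intended application in Theorem~\ref{t:q-int} is to $q$-semi-intersecting families, which are intersection-closed by definition. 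So your unease was well-placed — weak intersection alone does not support the claim — but the remedy is not to hunt for a container; it is to carry the intersection-closure hypothesis along, after which~(1) is a two-line consequence of Lemma~\ref{l:HC}: $\hat{U}$ is an ${\cal S}^{J_{t'-1}}$-core, $\hat{C}\in{\cal S}^{J_{t-1}}\subseteq{\cal S}^{J_{t'-1}}$ intersects $\hat{U}$, hence $\hat{U}\subseteq\hat{C}$.
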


Retrospectively, it turns out that our algorithm coincides with some run of an almost 
standard primal-dual algorithm that intends to cover the intersecting biset-family ${\cal S}[{\cal U}]$. 
The difference is in item 2 above, as we raise the dual variable of only one core;
it is also possible to raise all the dual variables corresponding to cores,
but this makes the analysis more complicated.
We will show that this modified algorithm still computes an optimal solution.
This will ensure Property~2 in Lemma~\ref{l:P12}; we give a formal proof of Property~2 after
proving that Property~1 holds for $J,{\cal U}$ at the end of the algorithm.

The following statement is easily verified, see Figure~\ref{f:sets}.

\begin {figure} 
\centering 
\epsfbox{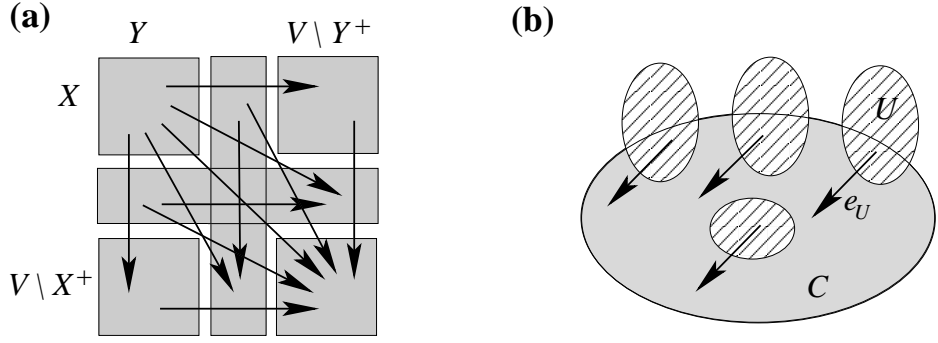}
   \caption {(a) All types of edges that can cover $\hat{X} \cap \hat{Y}$ or $\hat{X} \cup \hat{Y}$. 
             (b) An ${\cal S}^J$-core $C$ and the members of ${\cal U}$ intersecting $C$.
            }
   \label{f:sets}
\end {figure}

\begin{fact} \label{f:XY}
Let $\hat{X},\hat{Y}$ be bisets and let $e$ be an edge. 
\begin{itemize}
\item[{\em (i)}]
If $e$ covers $\hat{X} \cap \hat{Y}$ or $\hat{X} \cup \hat{Y}$ then $e$ covers $\hat{X}$ or $\hat{Y}$.
\item[{\em (ii)}]
If $e$ covers $\hat{X} \cup \hat{Y}$ and has tail in $X$ then $e$ covers $\hat{X}$. 
\item[{\em (iii)}]
If $e$ covers both $\hat{X} \cap \hat{Y}$ and $\hat{X} \cup \hat{Y}$ then $e$ covers both $\hat{X}$ and $\hat{Y}$.
\end{itemize}
\end{fact}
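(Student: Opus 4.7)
The plan is to prove all three items by directly unrolling the definition of ``edge covers biset.'' Writing $e=uv$, recall that $e$ covers $\hat{S}=(S,S^+)$ precisely when $u\in S$ and $v\in V\setminus S^+$, and that by definition $\hat{X}\cap\hat{Y}=(X\cap Y,X^+\cap Y^+)$ and $\hat{X}\cup\hat{Y}=(X\cup Y,X^+\cup Y^+)$. Thus ``$e$ covers $\hat{X}\cap\hat{Y}$'' unpacks to $u\in X\cap Y$ together with $v\notin X^+$ \emph{or} $v\notin Y^+$, while ``$e$ covers $\hat{X}\cup\hat{Y}$'' unpacks to $u\in X$ \emph{or} $u\in Y$ together with $v\notin X^+$ \emph{and} $v\notin Y^+$. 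With these two unpackings in hand, each item becomes a short case analysis.

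For (i), I would split according to which of $\hat{X}\cap\hat{Y}$, $\hat{X}\cup\hat{Y}$ is covered. In the intersection case, $u$ lies in both $X$ and $Y$, and the condition on $v$ gives $v\notin X^+$ or $v\notin Y^+$; the first subcase yields that $e$ covers $\hat{X}$, the second that $e$ covers $\hat{Y}$. In the union case, $v$ avoids both $X^+$ and $Y^+$, and the condition on $u$ gives $u\in X$ or $u\in Y$; these two subcases again give a cover of $\hat{X}$ or of $\hat{Y}$ respectively. For (ii), the hypothesis that $e$ covers $\hat{X}\cup\hat{Y}$ already forces $v\notin X^+$, and combined with $u\in X$ this is exactly the condition for $e$ to cover $\hat{X}$. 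For (iii), covering the intersection supplies $u\in X\cap Y$ (so $u\in X$ and $u\in Y$), while covering the union supplies $v\notin X^+\cup Y^+$ (so $v\notin X^+$ and $v\notin Y^+$); combining gives that $e$ covers both $\hat{X}$ and $\hat{Y}$.

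There is no genuine obstacle here: the whole statement is a bookkeeping lemma and the only ``work'' is keeping the two occurrences of ``or'' and ``and'' in the definitions straight (intersection gives a conjunction on tails and a disjunction on heads; union gives the opposite). If desired, I could phrase the proof in one line by observing that $\delta(\hat{X}\cap\hat{Y})\cup\delta(\hat{X}\cup\hat{Y})\subseteq\delta(\hat{X})\cup\delta(\hat{Y})$ for (i), and $\delta(\hat{X}\cap\hat{Y})\cap\delta(\hat{X}\cup\hat{Y})\subseteq\delta(\hat{X})\cap\delta(\hat{Y})$ for (iii), with (ii) being the immediate consequence of the ``head'' conjunction in the union case.
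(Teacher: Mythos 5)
Your proof is correct and takes the same approach the paper intends: the paper states Fact~\ref{f:XY} as ``easily verified'' (pointing to a figure rather than a written argument), and your direct unrolling of the definitions of $\hat{X}\cap\hat{Y}$, $\hat{X}\cup\hat{Y}$, and ``$e$ covers $\hat{S}$'' is exactly that verification. The concluding reformulation $\delta(\hat{X}\cap\hat{Y})\cup\delta(\hat{X}\cup\hat{Y})\subseteq\delta(\hat{X})\cup\delta(\hat{Y})$ and $\delta(\hat{X}\cap\hat{Y})\cap\delta(\hat{X}\cup\hat{Y})\subseteq\delta(\hat{X})\cap\delta(\hat{Y})$ is also correct and is a clean way to package (i) and (iii).
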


Let $F$ be the set of edges stored in $J$ at the end of Phase~1.
Then $F$ covers ${\cal S}$. Note that every edge in $F$ 
has its tail in the inner part of some $\hat{U} \in {\cal U}$, and that 
$|\delta_F(\hat{U})|=1$ for every $\hat{U} \in {\cal U}$.
Thus the following statement implies that Property~1 holds for $J$ after Phase~2, 
even if ${\cal S}$ is only intersection-closed.

\begin{lemma}
Let $F$ be an edge-cover of an intersection-closed biset-family ${\cal S}$ and let
${\cal U} \subseteq {\cal S}$ be a sub-family of ${\cal S}$ of pairwise disjoint bisets 
such that every edge in $F$ has its tail in the inner part of some $\hat{U} \in {\cal U}$ and
such that $\delta_F(\hat{U})=\{e_U\}$ for every $\hat{U} \in {\cal U}$.
Let $J \subseteq F$ be a cover of ${\cal S}[{\cal U}]$ and let $\hat{C}$ be an ${\cal S}^J$-core.
Then the following holds (see Figure~\ref{f:sets}(b)):
\begin{itemize}
\item[{\em (i)}]
If $\hat{U} \in {\cal U}$ and $\hat{C}$ intersect then $\delta_J(\hat{C} \cap \hat{U})=\{e_U\}$;
thus $\hat{U}$ intersects no ${\cal S}^J$-core distinct from $\hat{C}$.
\item[{\em (ii)}]
$\delta_F(\hat{B}_C)=\emptyset$, where $\hat{B}_C$ is the union of $\hat{C}$ 
and the bisets in ${\cal U}$ intersecting with $\hat{C}$;
thus $\hat{B}_C \notin {\cal S}$.
\end{itemize}
\end{lemma}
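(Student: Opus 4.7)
The plan is to prove (i) first, extracting along the way the side-fact that the head of the cover edge lies in $C^+$, and then to derive (ii) from this side-fact. The only structural ingredients needed are Fact~\ref{f:XY}(i) (which makes ${\cal S}^J$ intersection-closed), Lemma~\ref{l:HC} (so ${\cal S}^J$-cores are pairwise disjoint), and the pairwise disjointness of the inner parts of members of ${\cal U}$.

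For (i), note that $\hat{C}\cap\hat{U}\in{\cal S}$ since ${\cal S}$ is intersection-closed, and $\hat{C}\cap\hat{U}\subseteq\hat{U}\in{\cal U}$ puts it into ${\cal S}[{\cal U}]$, so $J$ covers it. Let $e\in\delta_J(\hat{C}\cap\hat{U})$. Its tail lies in $C\cap U$, and by hypothesis lies in the inner part of some $\hat{U}'\in{\cal U}$; pairwise disjointness forces $\hat{U}'=\hat{U}$. The head lies in $(V\setminus C^+)\cup(V\setminus U^+)$; if it were in $V\setminus C^+$ then $e$ (with tail in $C$) would cover $\hat{C}$, contradicting $\hat{C}\in{\cal S}^J$. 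So the head lies in $V\setminus U^+$, which means $e$ covers $\hat{U}$; by $\delta_F(\hat{U})=\{e_U\}$ and $J\subseteq F$, we get $e=e_U$, whence $\delta_J(\hat{C}\cap\hat{U})=\{e_U\}$ and the head of $e_U$ lies in $C^+$ — this is the side-fact reused in (ii). For the second claim of (i), if $\hat{U}$ also intersected a distinct ${\cal S}^J$-core $\hat{C}'$, the same argument with $\hat{C}'$ would force $e_U$ to have its tail in $C'$ as well, so in $C\cap C'$; but by Fact~\ref{f:XY}(i) the family ${\cal S}^J$ is intersection-closed, so Lemma~\ref{l:HC} gives $C\cap C'=\emptyset$, a contradiction.

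For (ii), suppose for contradiction that some $e\in F$ covers $\hat{B}_C$. The tail lies in $B_C$ and in the inner part $U'$ of some $\hat{U}'\in{\cal U}$; since $B_C=C\cup\bigcup\{U:\hat{U}\in{\cal U},\ \hat{U}\cap\hat{C}\neq\emptyset\}$ and the inner parts of members of ${\cal U}$ are pairwise disjoint, the tail lying in $U'\cap B_C$ forces $\hat{U}'$ to intersect $\hat{C}$. Consequently $U'^+\subseteq B_C^+$, so the head of $e$, being in $V\setminus B_C^+$, is also in $V\setminus U'^+$; hence $e$ covers $\hat{U}'$ and $e=e_{U'}$. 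But by the side-fact from (i), the head of $e_{U'}$ lies in $C^+\subseteq B_C^+$, contradicting that the head lies in $V\setminus B_C^+$. Thus $\delta_F(\hat{B}_C)=\emptyset$, and since $F$ covers ${\cal S}$ we conclude $\hat{B}_C\notin{\cal S}$.

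The main obstacle is exactly that side-fact: to prove (ii) we need more than "the head of the cover edge lies outside $C^+\cap U^+$"; we need it to lie in $C^+$. This upgrade uses the hypothesis that $\hat{C}$ is a core of ${\cal S}^J$ (so no edge of $J$ covers $\hat{C}$), and it is this refinement, together with the pairwise disjointness of ${\cal S}^J$-cores furnished by Fact~\ref{f:XY}(i) and Lemma~\ref{l:HC}, that makes the two parts go through cleanly.
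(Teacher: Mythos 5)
Your proof is correct, and while part (i) runs essentially parallel to the paper's argument, your treatment of part (ii) takes a genuinely more elementary route. For (i), both you and the paper place $\hat{C}\cap\hat{U}$ in ${\cal S}[{\cal U}]$, note the cover edge cannot cover $\hat{C}$, conclude it covers $\hat{U}$ and hence equals $e_U$; the refinement you carry forward — that the head of $e_U$ lands specifically in $C^+$, not merely outside $C^+\cap U^+$ — is where the two proofs diverge. The paper's (ii) decomposes $\hat{B}_C=\hat{X}\cup\hat{Y}$ with $\hat{X}=\hat{C}\cup\hat{U}$, invokes Fact~\ref{f:XY}(ii) twice to pin down $e=e_U$, and then combines coverage of $\hat{C}\cup\hat{U}$ and $\hat{C}\cap\hat{U}$ via Fact~\ref{f:XY}(iii) to conclude $e_U$ covers $\hat{C}$, contradicting $\hat{C}\in{\cal S}^J$. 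You bypass the decomposition and Fact~\ref{f:XY}(iii) entirely: since $U'^+\subseteq B_C^+$, the head of $e$ already lies in $V\setminus U'^+$, so $e=e_{U'}$ at once; the side-fact then places that head in $C^+\subseteq B_C^+$, directly contradicting $e\in\delta_F(\hat{B}_C)$. The two contradictions are logically equivalent (``head $\in C^+$'' with tail in $C$ is exactly the negation of ``$e_U$ covers $\hat{C}$''), but your version localizes everything to a single containment check on the head, which is cleaner and relies only on Fact~\ref{f:XY}(i) via Lemma~\ref{l:HC}, not on parts (ii)--(iii) of that fact.
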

\begin{proof}
We prove (i). Let $\hat{U} \in {\cal U}$ and $\hat{C}$ intersect.  
Note that $\hat{C} \cap \hat{U} \in {\cal S}[{\cal U}]$, since ${\cal S}$ is intersection-closed.
Hence $\delta_J(\hat{C} \cap \hat{U}) \neq \emptyset$.
Let $e \in \delta_J(\hat{C} \cap \hat{U})$.
Then $e$ covers $\hat{C}$ or $\hat{U}$, by Fact~\ref{f:XY}(i).
But $e$ does not cover $\hat{C}$, since $e \in J$, and since $J$ does not 
cover $\hat{C}$. Hence $e$ covers $\hat{U}$. 
Consequently, $e=e_U$ for any $e \in \delta_J(\hat{C} \cap \hat{U})$,
hence $\delta_J(\hat{C} \cap \hat{U})=\{e_U\}$.
This implies that the tail of $e_U$ is in $C$, and it cannot be in the inner part of 
any other ${\cal S}^J$-core, since the ${\cal S}^J$-cores are pairwise disjoint.
Thus $U$ intersects no ${\cal S}^J$-core distinct from $C$.

We prove (ii). 
Suppose to the contrary that there is $e \in \delta_F(\hat{B}_C)$.
Let $\hat{U} \in {\cal U}$ be the biset whose inner part contains the tail of $e$.
Note that $U \cap C \neq \emptyset$.
Let $\hat{X}=\hat{C} \cup \hat{U}$ and 
let $\hat{Y}$ be the union of $\hat{C}$ and the bisets in ${\cal U} \setminus \{\hat{U}\}$
that intersect $\hat{C}$. Note that $\hat{B}_C=\hat{X} \cup \hat{Y}$.
Hence the tail of $e$ lies in $X$ and $e$ covers $\hat{X} \cup \hat{Y}$.
Thus $e$ covers $\hat{C} \cup \hat{U}$, by Fact~\ref{f:XY}(ii).
Applying the same argument on the bisets $\hat{U},\hat{Y}$ we obtain that $e$ covers $\hat{U}$.
Hence $e=e_U$, as $e_U$ is the only edge in $F$ that covers $\hat{U}$.
By (i), $e_U$ covers $\hat{C} \cap \hat{U}$.
Consequently, $e_U$ covers both $\hat{C} \cup \hat{U}$ and $\hat{C} \cap \hat{U}$,
and hence $e_U$ covers both $\hat{C}$ and $\hat{U}$, by Fact~\ref{f:XY}(iii).
However $e_U \in J$, contradicting that $\hat{C} \in {\cal S}^J$.
\end{proof}

For Property~2 it is sufficient to prove the following.

\begin{lemma} 
If ${\cal S}$ is weakly-intersecting, then at the end of the algorithm the following holds: 
$J$ co\-vers the biset-family ${\cal S}[{\cal U}]$,
$y$ is a feasible solution to {\em \bf (D)} for ${\cal S}[{\cal U}]$, 
and $J,y$ satisfy the complementary slackness conditions;
hence $J$ and $y$ are optimal solutions.
\end{lemma}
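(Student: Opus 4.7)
The plan is to verify the four claims of the lemma in turn---that ${\cal S}[{\cal U}]$ is intersecting, that $J$ is primal-feasible for ${\cal S}[{\cal U}]$, that $y$ is dual-feasible for {\bf (D)} restricted to ${\cal S}[{\cal U}]$, and that the primal and dual complementary slackness conditions both hold---and then to conclude optimality from weak LP duality.

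I would first establish that ${\cal S}[{\cal U}]$ is intersecting. By Claim~\ref{c:feasible} the members of the final ${\cal U}$ are pairwise disjoint, so any two intersecting bisets $\hat{X},\hat{Y} \in {\cal S}[{\cal U}]$ must both be contained in the same $\hat{U} \in {\cal U}$; the weakly-intersecting hypothesis applied to $\hat{U}$ then places $\hat{X}\cap\hat{Y},\hat{X}\cup\hat{Y}$ in ${\cal S}[\{\hat{U}\}]\subseteq{\cal S}[{\cal U}]$. Primal feasibility of $J$ is then immediate from Phase~2's design, since Phase~1 already covers all of ${\cal S}$ and reverse-delete only removes edges whose deletion preserves ${\cal S}[{\cal U}]$-coverage. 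Dual feasibility is also immediate, since in Phase~1 each single-variable dual raise is stopped precisely at the instant some edge first becomes tight, and Phase~2 does not alter $y$. The primal slackness condition $e \in J \Rightarrow e$ is tight holds because every edge is added at its moment of tightness and subsequent raises (each affecting only a single dual variable for a fresh core) never over-tighten it. A subsidiary check is that $y_{\hat{S}}>0$ implies $\hat{S} \in {\cal S}[{\cal U}]$: when $y_{\hat{S}}$ was raised, $\hat{S}$ was inserted into ${\cal U}$, and $\hat{S}$ can only later leave ${\cal U}$ by being subsumed by a strictly larger biset added to ${\cal U}$, so by transitivity $\hat{S}$ is contained in some surviving $\hat{U} \in {\cal U}$.

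The main obstacle is the dual complementary slackness condition $|\delta_J(\hat{S})|=1$ for every $\hat{S}$ with $y_{\hat{S}}>0$, for which I would invoke the standard uncrossing argument for intersecting biset families. Phase~2 leaves $J$ inclusion-minimal with respect to covering ${\cal S}[{\cal U}]$, so to each $e \in J$ one can assign a witness $\hat{W}_e \in {\cal S}[{\cal U}]$ with $\delta_J(\hat{W}_e)=\{e\}$. Using Fact~\ref{f:XY}(i) together with the intersection-closedness of ${\cal S}[{\cal U}]$ established above, the witness collection can be uncrossed pairwise into a laminar sub-family of ${\cal S}[{\cal U}]$ while preserving the unique-covering-edge property. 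Now fix $\hat{S}$ with $y_{\hat{S}}>0$: at the iteration when $y_{\hat{S}}$ was raised, $\hat{S}$ was a core, hence inclusion-minimal in the then-uncovered family, so by Lemma~\ref{l:HC} applied inside the intersection-closed ${\cal S}[{\cal U}]$, any witness intersecting $\hat{S}$ must contain it; laminarity then forces $\hat{S}$ inside a unique inclusion-minimal witness $\hat{W}$, and every edge of $J$ covering $\hat{S}$ must also cover $\hat{W}$, so $|\delta_J(\hat{S})|=1$. Putting all four properties together,
\[
c(J)=\sum_{e \in J} c_e=\sum_{e \in J}\sum_{\hat{S}:\,e\in\delta(\hat{S})}y_{\hat{S}}=\sum_{\hat{S}} |\delta_J(\hat{S})|\,y_{\hat{S}}=\sum_{\hat{S}} y_{\hat{S}}\leq \tau({\cal S}[{\cal U}])\leq c(J),
\]
so equality holds throughout and both $J$ and $y$ are optimal for their respective LPs over ${\cal S}[{\cal U}]$.
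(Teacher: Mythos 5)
Your approach is a recognizable alternative to the paper's: where the paper argues by picking two witness bisets for two hypothetical edges in $\delta_J(\hat{S})$ and deriving a direct contradiction via their union, you try to build a full laminar witness family by uncrossing and then use laminarity. That could in principle work, but as written there is a genuine gap at the crucial step.

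The problem is the sentence ``at the iteration when $y_{\hat{S}}$ was raised, $\hat{S}$ was a core, hence inclusion-minimal in the then-uncovered family, so by Lemma~\ref{l:HC} applied inside the intersection-closed ${\cal S}[{\cal U}]$, any witness intersecting $\hat{S}$ must contain it.'' Lemma~\ref{l:HC} only gives $\hat{S} \subseteq \hat{W}$ if $\hat{S}$ is a core of a family that contains \emph{both} $\hat{S}$ \emph{and} $\hat{W}$. You have two candidate families and $\hat{S}$ is a core of neither-with-$\hat{W}$: it is a core of the residual family ${\cal S}^{J'}$ at the past moment its dual variable was raised, but you never show that the witness $\hat{W}$ lies in that residual family; and it lies in ${\cal S}[{\cal U}]$, but $\hat{S}$ is not (in general) inclusion-minimal there, since ${\cal S}[{\cal U}]$ also contains bisets already covered by $J'$. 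This is exactly the point the paper spends the most effort on: its inner Claim establishes the temporal fact that the chosen witness $\hat{W}$ for $e$ was still uncovered at \emph{every} iteration of Phase~1 before $e$ was added (because $\hat{W}$ survives the reverse delete down to the step where $e$ was considered, and the edge sets only grow in Phase~1), and only then can Lemma~\ref{l:HC} be invoked inside that residual family to conclude $\hat{S} \subseteq \hat{W}$. You skip this entirely.

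The uncrossing step compounds the difficulty rather than avoiding it. Even if you grant that $\hat{W}'\cap\hat{W}''$ and $\hat{W}'\cup\hat{W}''$ are each uniquely covered by one of $e',e''$ (which does follow from Fact~\ref{f:XY}(i) and (iii) and needs a few lines you omit), the replaced witnesses are mixtures of bisets chosen at \emph{different} reverse-delete steps, so the temporal property you would need (``uncovered at the time $\hat{S}$ was a core'') has to be re-derived for the uncrossed family; it is not automatic. The paper sidesteps all of this by never uncrossing globally: it takes exactly the two witnesses for the two offending edges, notes both contain $\hat{S}$ and hence lie inside the same $\hat{U}$, forms their union (legal because ${\cal S}$ is weakly-intersecting), and reaches a contradiction. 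If you want to salvage the laminar-uncrossing route, you must (a) prove the temporal containment $\hat{S}\subseteq\hat{W}_e$ for the \emph{original} witnesses first, exactly as the paper does, and only then observe that the laminar step is not actually needed once you have two such witnesses for two distinct edges.
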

\begin{proof}
It is clear that $J$ covers ${\cal S}[{\cal U}]$, 
and that during the algorithm $y$ remains a feasible solution to {\bf (D)}.
Since only tight edges enter $J$, after Phase~1 the {\em Primal Complementary Slackness Conditions} hold for $J$.
So, the only part that requires proof is that after Phase~2, the 
{\em Dual Complementary Slackness Conditions} hold for $J$ and $y$.

\vspace{0.1cm}

\noindent
{\em  Claim:}
{\em Consider an arbitrary $\hat{S} \in {\cal S}$ with $y_{\hat{S}}>0$ and an edge $e \in \delta_J(\hat{S})$.
Then there exists $\hat{W} \in {\cal S}[{\cal U}]$ such that $\delta_J(\hat{W})=\{e\}$ and 
$\hat{S} \subseteq \hat{W} \subseteq \hat{U}$ for some $\hat{U} \in {\cal U}$.}

\vspace{0.1cm}

\noindent
{\em Proof:}
Such $\hat{W}$ can be chosen as any member of ${\cal S}[{\cal U}]$
which becomes uncovered if we delete (instead of keeping)
$e$ at the reverse delete step when $e$ was considered for deletion;
note that the algorithm decided to keep $e$, hence such $\hat{W}$ exists.
Moreover, since the edges were deleted in the reverse order,
$\hat{W} \in {\cal S}[{\cal U}]^{J \setminus\{e\}}$. 
Obviously, $\delta_J(\hat{W})=\{e\}$ and $\hat{S}$ and $\hat{W}$ intersect.
Finally, to see that $\hat{S} \subseteq \hat{W}$ note that: 
(i) at any iteration before $e$ was added, $\hat{W}$ was uncovered;
(ii) since $y_{\hat{S}}>0$, there was an iteration before $e$ was added 
at which $\hat{S}$ was an ${\cal S}^J$-core. Hence $\hat{S} \subseteq \hat{W}$, by Lemma~\ref{l:HC}. 
\hfill $\Box$

Now assume to the contrary that there is $\hat{S} \in {\cal S}[{\cal U}]$ with $y_{\hat{S}}>0$ 
such that there are $e',e'' \in \delta_J(\hat{S})$, $e' \neq e''$.
Let $\hat{U} \in {\cal U}$ such that $\hat{S} \subseteq \hat{U}$.
Let $\hat{W}',\hat{W}''$ be bisets for $e',e''$ as in the Claim above. 
so $\delta_J(\hat{W}')=\{e'\}$, $\delta_J(\hat{W}'')=\{e''\}$ and 
$\hat{S} \subseteq \hat{W}' \cap \hat{W}'' \subseteq \hat{U}$.
In particular, $\hat{W}',\hat{W''}$ intersect and $\hat{W}',\hat{W}'' \subseteq \hat{U}$.
Thus $\hat{W}' \cup \hat{W}'' \in {\cal S}[{\cal U}]$, since $\hat{U} \in {\cal S}$ 
and since ${\cal S}$ is weakly intersecting.
Consequently, there is an edge $e \in \delta_J(\hat{W}' \cup \hat{W}'')$. 
This implies that $e \in \delta_J(\hat{W}')$ or $e \in \delta_J(\hat{W}'')$, by Fact~\ref{f:XY}(i).
Consequently, $e=e'$ or $e=e''$.
Since the tail of each one of $e',e''$ is in $S \subseteq W' \cap W''$, so is the tail of $e$.
The head of $e$ is in $V \setminus (W'^+ \cup W''^+)$, since $e$ covers $\hat{W}' \cup \hat{W}''$.
We conclude that $e \in \delta_J(\hat{W}') \cap \delta_J(\hat{W}'')$. This is a contradiction since 
$\delta_J(\hat{W}') = \{e'\}$, $\delta_J(\hat{W}'') = \{e''\}$, and $e' \neq e''$. 
\end{proof}

The proof of Lemma~\ref{l:P12}, and thus also of Theorem \ref{t:q-int} is complete.

\section{Concluding remarks and open problems}
\ignore{-----------------
Cheriyan and Laekhanukit \cite{ChL} considered the following generalization
of several min-cost connectivity problems.
In the {\sf $k$-$(S,T)$ Connectivity Augmentation} problem
we are given a $k$-$(S,T)$-connected graph $G_0$
(namely, in $G_0$ are $k$ edge-disjoint $st$-paths for every $(s,t) \in S \times T$) 
and an edge-set $E$ with costs. 
The goal is to find a minimum-cost edge set $J \subseteq E$ such that the graph
$G_0 \cup J$ is $(k+1)$-$(S,T)$-connected.
They gave an $O(\log |S||T|)$-approximation algorithm for the case when 
every positive edge has tail in $S$ and head in $T$.
This version includes the version of the {\sf Subset $k$-Connected Subgraph} problem considered in this paper.
Note however that unless $k=|T|-o(|T|)$, our ratio for this problem is better,
because we use the $k$-regularity property, which does not have a natural extension 
to the {\sf $k$-$(S,T)$ Connectivity Augmentation} problem.

The following problem includes the problem considered by Cheriyan and Laekhanu\-kit \cite{ChL},
and generalizes the {\sf Biset-Family Edge-Cover} problem with crossing ${\cal F}$ 
(see the reduction in \cite{FJ}).
Given a directed bipartite graph $G=(S \cup T,E)$ with costs on the edges
and all edges from $S$ to $T$, and a {\em set-family} ${\cal F}$ on $S \cup T$,  
find a minimum-cost edge-set $J \subseteq E$ that covers ${\cal F}$;
here we say that ${\cal F}$ is {\em $(S,T)$-crossing} if $X \cap Y, X \cup Y \in {\cal F}$
for any $X,Y \in {\cal F}$ with $(X \cap Y) \cap S \neq \emptyset$ and 
$T \setminus (X \cup Y) \neq \emptyset$.
Our algorithm in Theorem~\ref{t:FL}(i) easily extends to the problem 
of covering an $(S,T)$-crossing family by a minimum-cost edge-set, as is shown in \cite{N-ST}. 
Here we preferred the biset-family setting for simplicity of exposition, and since
the concept of $k$-regularity is not a natural one for $(S,T)$-crossing families.
We note also that Theorem~\ref{t:FL} easily extends to undirected edge-covers 
with a loss of a factor of $2$ in the approximation ratio.
-----------------------------}
The main open question is whether the {\sf $k$-Connectivity Augmentation} 
problem admits a constant ratio approximation algorithm also for large values of $k$,
say $k=n-\sqrt{n}$. In this context we mention four papers.
In \cite{LN} is is shown that for values of $k$ close to $n$, 
the approximability of the {\sf $k$-Connectivity Augmentation} problem is the same 
for directed and undirected graphs, up to a factor of $2$. 
Therefore, one should not expect to obtain a constant ratio for undirected graphs only. 
On the positive side, Frank and Jordan \cite{FJ} showed that for directed graphs, {\sf $k$-Connected Subgraph} 
can be solved in polynomial time when the input graph is complete and the costs are in $\{0,1\}$.
For arbitrary costs however, there are two negative results.
In \cite{RWe} Ravi and Williamson gave an example showing that the approximation
ratio of a standard primal-dual algorithm that intends to edge-cover 
the biset-family ${\cal S}$ as in Fact~\ref{f:q} has approximation ratio $\Omega(k)$.
This does not exclude that some other variation of the primal-dual algorithm, that relies on concepts from \cite{FJ}
has a constant ratio for crossing biset-families.
However recently, Aazami, Cheriyan, and Laekhanukit \cite{ACL} showed that the standard iterative 
rounding method that is based on a standard LP-relaxation for {\sf $k$-Connectivity Augmentation}
(thus intends to edge-cover a crossing $k$-regular biset-family) has approximation ratio $\Omega(\sqrt{k})$. 


\begin{thebibliography}{10}

\bibitem{ACL}
A.~Aazami, J.~Cheriyan, and B.~Laekhanukit.
\newblock A bad example for the iterative rounding method for mincost
  $k$-connected spanning subgraphs.
\newblock Manuscript.

\bibitem{ChL}
J.~Cheriyan and B.~Laekhanukit.
\newblock Approximation algorithms for minimum-cost $k$-{$(S,T)$} connected
  digraphs.
\newblock Manuscript, 2010.

\bibitem{FL}
J.~Fackharoenphol and B.~Laekhanukit.
\newblock An ${O}(\log^2 k)$-approximation algorithm for the $k$-vertex
  connected subgraph problem.
\newblock In {\em STOC}, pages 153--158, 2008.

\bibitem{F-R}
A.~Frank.
\newblock Rooted $k$-connections in digraphs.
\newblock {\em Discrete Applied Math.}, 157(6):1242--1254, 2009.

\bibitem{FJ}
A.~Frank and T.~Jord\'{a}n.
\newblock Minimal edge-coverings of pairs of sets.
\newblock {\em J. of Comb. Theory B}, 65:73--110, 1995.

\bibitem{KKL}
G.~Kortsarz, R.~Krauthgamer, and J.~R. Lee.
\newblock Hard\-ness of approximation for vertex-connectivity net\-work design
  problems.
\newblock {\em {SIAM} Journal on Computing}, 33(3):704--720, 2004.

\bibitem{KN2}
G.~Kortsarz and Z.~Nutov.
\newblock Approximating $k$-node connected subgraphs via critical graphs.
\newblock {\em {SIAM} Journal on Computing}, 35(1):247--257, 2005.

\bibitem{KN-sur}
G.~Kortsarz and Z.~Nutov.
\newblock Approximating minimum-cost connectivity problems.
\newblock In T.~F. Gonzalez, editor, {\em {\em Chapter 58 in} Approximation
  Algorithms and Metaheuristics}. Chapman \& Hall/CRC, 2007.

\bibitem{L-subs}
B.~Laekhanukit.
\newblock An improved approximation algorithm for minimum-cost subset
  $k$-connectivity.
\newblock In {\em ICALP}, pages 13--24, 2011.

\bibitem{LN}
Y.~Lando and Z.~Nutov.
\newblock Inapproximability of survivable networks.
\newblock {\em Theortical Computer Science}, 410(21-23):2122--2125, 2009.

\bibitem{N-focs}
Z.~Nutov.
\newblock Approximating minimum cost connectivity problems via uncrossable
  bifamilies.
\newblock Manuscript 2010. Preliminary version in FOCS 2009, pages 417-426.

\bibitem{N}
Z.~Nutov.
\newblock An almost {$O(\log k)$}-approximation for $k$-connected subgraphs.
\newblock In {\em SODA}, pages 922--931, 2009.

\bibitem{N-subs}
Z.~Nutov.
\newblock Approximating subset $k$-connectivity problems.
\newblock In {\em WAOA}, pages 9--20, 2011.

\bibitem{RW}
R.~Ravi and D.~P. Williamson.
\newblock An approximation algorithm for minimum-cost vertex-connectivity
  problems.
\newblock {\em Algorithmica}, 18:21--43, 1997.

\bibitem{RWe}
R.~Ravi and D.~P. Williamson.
\newblock Erratum:~an~approxi\-mation algorithm for minimum-cost
  vertex-connectivity problems.
\newblock {\em Algorithmica}, 34(1):98--107, 2002.

\end{thebibliography}

\end{document}